\newcommand{\Sskip}{}
\newif\ifsavespace
\newcommand{\Paragraph}[1]{\noindent{\textbf{#1}}}
\newcommand{\softParagraph}[1]{\noindent{\emph{#1}}}
\newcommand{\masterA}{{\cal A}_{mas}}
\newcommand{\nestedA}{\mathbb{A}}
\newcommand{\slaveA}{{\mathfrak{B}}}
\newcommand{\nonnestedA}{{\cal A}}
\newcommand{\silent}[1]{\mathsf{sil}({#1})}
\newcommand{\cost}{{C}}
\newcommand{\masterRun}{\Pi}
\newcommand{\slaveRun}{\pi}
\newcommand{\lang}{{\cal L}}
\newcommand{\valueL}[1]{{\cal L}_{{#1}}}
\newcommand{\abs}{\mathop{\mathsf{Abs}}}
\newcommand{\lpair}[2]{\langle {#1}, {#2} \rangle}
\newcommand{\PTIME}{\textsc{PTime}{}}
\newcommand{\PSPACE}{\textsc{PSpace}{}}
\newcommand{\EXPSPACE}{\textsc{ExpSpace}{}}
\newcommand{\NLOGSPACE}{\textsc{NLogSpace}{}}
\newcommand{\Z}{\mathbb{Z}}
\newcommand{\Q}{\mathbb{Q}}
\newcommand{\hide}[1]{}
\definecolor{darkgreen}{RGB}{00,140,00}
\definecolor{darkred}{RGB}{140,0,00}
\definecolor{darkblue}{RGB}{0,0,140}
\newcommand{\tuple}[1]{\langle #1 \rangle}
\newcommand{\buchi}{B\"{u}chi}
\newcommand{\fsum}{\textsc{Sum}}
\newcommand{\fmax}{\textsc{Max}}
\newcommand{\fmin}{\textsc{Min}}
\newcommand{\flimavg}{\textsc{LimAvg}}
\newcommand{\flimsup}{\textsc{LimSup}}
\newcommand{\fsup}{\textsc{Sup}}
\newcommand{\aut}{{\cal A}}
\newcommand{\Acc}{\mathsf{Acc}}
\newcommand{\InfVal}{\mathsf{InfVal}}
\newcommand{\tikzSlaveA}{
\node[circle,draw] (P1) at (0,0) {$q_a$};
\node[circle,draw] (P2) at (1.8,0){$q_F$};
\node[circle,draw, minimum size=0.5cm] (P0) at (1.8,0){};

\draw[->] (P1) to[loop above] node[above,inner sep=5pt] {$(r,1), (\#,1)$} (P1);

\draw[->] (P1) to node[above] {$(g,0)$} (P2);
}
\newcommand{\tikzMasterOne}{
\node[circle, draw] (Q0) at (0,2.5) {$q_0$};

\draw[->] (Q0) to[loop above] node[above] (E1) {$(r,1)$} (Q0);
\draw[->] (Q0) to[loop below] node[below] (E2) {$(\#,2), (g,2)$} (Q0);
}
\newcommand{\tikzOfSlaveRun}{
\foreach \i/\l in {1/r,2/r,3/r,4/{\texttt{\#}},5/r,6/g,7/r,8/g}
{
   \node[rectangle,draw,minimum height=0.5cm] (a\i) at (\i*0.4,0.3) {$\l$};
}
\foreach \i/\j/\l in {1/2/5,2/3/4,3/4/3,5/6/1,7/8/1}
{
   \draw[->] ($(a\i.north) + (0.03,0)$) to[out=90,in=90] node[above] (b\i) {\l} ($(a\j.north)+ (-0.03,0)$) ;
}
\foreach \i/\j/\l in {1/6/1,2/6/2,3/6/3,4/6/4,5/6/5,
                      2/5/1,3/5/2,4/5/3,5/5/4,
                      3/4/1,4/4/2,5/4/3,
                      5/3/1,
                      7/6/1}
{
   \node[rectangle,draw] (c\i\j) at (\i*0.4,\j*0.55+0.2) {$\l$};
}
\foreach \i/\j/\k in {1/6,2/5,3/4,5/3,7/6}
{
   \draw[->] (c\i\j)  to[out=270,in=110] (b\i) ;
}
\foreach \i/\j/\k in {6/6/5,7/5/4,8/4/3,10/3/1,14/6/1}
{
 \node[rectangle,draw,color=blue, thick, minimum width=\k*0.4cm+0.15cm,minimum height=0.55cm,thick] at (\i*0.2,\j*0.55+0.2) {};
}
\foreach \i/\j/\l/\w/\c in {-2.5/6/\slaveA_1/r^3\texttt{\#}rg/3,-2.5/5/\slaveA_1/r^2\texttt{\#}rg/2,-2.5/4/\slaveA_1/r\texttt{\#}rg/1,-2.5/3/\slaveA_1/rg/1,8/6/\slaveA_1/rg/1}
{
   \node[rectangle] (c\i\j) at (\i*0.43+0.25,\j*0.55+0.2) {${\l}(\w)$};
}
}
\newcommand{\pumpingOne}
{
\node[] at (-0.7,0) {$\optWord$};
\node[] at (0.3,-0.6) {$a$};
\node[] at (1.7,-0.6) {$b$};
\node[] at (2.2,-0.6) {$c_1$};
\node[] at (2.7,-0.6) {$c_2$};
\node[] at (3.7,-0.6) {$d$};

\node[draw, rectangle, minimum height=1.8em, minimum width=1.6em] (V) at (0,0) {};
\node[draw, rectangle, minimum height=1.8em, minimum width=4em] (W) at (1,0) {};
\node[draw, rectangle, minimum height=1.8em, minimum width=6em] (W) at (2.75,0) {};

\draw (3.8,0.315) -- (5.5,0.315);
\draw (3.8,-0.315) -- (5.5,-0.315);

\draw[dashed] (0.3,0.35) -- (0.3,1.3);
\draw[dashed] (1.7,0.35) -- (1.7,1.3);
\draw[dashed] (2.2,-0.35) -- (2.2,1.3);
\draw[dashed] (2.7,-0.35) -- (2.7,1.3);


\node[draw, rectangle, minimum height=0.3em, minimum width=6.4em] at (2.55,0.6) {};
\node[draw, rectangle, minimum height=0.3em, minimum width=6.8em] at (2.45,1.2) {};

\node[draw, rectangle, minimum height=0.3em, minimum width=2.2em] at (1.7,0.9) {};
\node[draw, rectangle, minimum height=0.3em, minimum width=0.9em] at (2.3,0.9) {};
\node[draw, rectangle, minimum height=0.3em, minimum width=1.0em] at (2.7,0.9) {};

\node[draw, fill=blue, rectangle, minimum height=0.3em, minimum width=1.3em] at (2.45,0.6) {};
\node[draw, fill=blue, rectangle, minimum height=0.3em, minimum width=1.3em] at (2.45,1.2) {};

\begin{scope}[yshift=-8em]
\node[] at (-0.7,0) {$\optWord$};
\node[] at (0.3,-0.6) {$a$};
\node[] at (1.7,-0.6) {$b$};
\node[] at (2.2,-0.6) {$c_1$};
\node[] at (2.7,-0.6) {$c_2$};

\node[draw, rectangle, minimum height=1.8em, minimum width=1.6em] (V) at (0,0) {};
\node[draw, rectangle, minimum height=1.8em, minimum width=4em] (W) at (1,0) {};
\node[draw, rectangle, minimum height=1.8em, minimum width=12em] (W) at (2.75,0) {};

\draw (3.8,0.315) -- (5.5,0.315);
\draw (3.8,-0.315) -- (5.5,-0.315);

\draw[dashed] (0.3,0.35) -- (0.3,1.3);
\draw[dashed] (1.7,0.35) -- (1.7,1.3);
\draw[dashed] (2.2,-0.35) -- (2.2,1.3);
\draw[dashed] (2.7,-0.35) -- (2.7,1.3);
\draw[dashed] (3.2,-0.35) -- (3.2,1.3);
\draw[dashed] (3.7,-0.35) -- (3.7,1.3);

\node[draw, rectangle, minimum height=0.3em, minimum width=8.4em] at (2.90,0.6) {};
\node[draw, rectangle, minimum height=0.3em, minimum width=8.8em] at (2.8,1.2) {};

\node[draw, rectangle, minimum height=0.3em, minimum width=2.2em] at (1.7,0.9) {};
\node[draw, rectangle, minimum height=0.3em, minimum width=0.9em] at (2.3,0.9) {};
\node[draw, rectangle, minimum height=0.3em, minimum width=1.3em] at (2.75,0.9) {};
\node[draw, rectangle, minimum height=0.3em, minimum width=1.3em] at (3.25,0.9) {};
\node[draw, rectangle, minimum height=0.3em, minimum width=1.0em] at (3.7,0.9) {};

\node[draw, fill=blue, rectangle, minimum height=0.3em, minimum width=1.3em] at (2.45,0.6) {};
\node[draw, fill=blue, rectangle, minimum height=0.3em, minimum width=1.3em] at (2.95,0.6) {};
\node[draw, fill=blue, rectangle, minimum height=0.3em, minimum width=1.3em] at (3.45,0.6) {};

\node[draw, fill=blue, rectangle, minimum height=0.3em, minimum width=1.3em] at (2.45,1.2) {};
\node[draw, fill=blue, rectangle, minimum height=0.3em, minimum width=1.3em] at (2.95,1.2) {};
\node[draw, fill=blue, rectangle, minimum height=0.3em, minimum width=1.3em] at (3.45,1.2) {};

\end{scope}
}
\newcommand{\pumpingTwo}
{
\node[] at (-0.7,0) {$\optWord$};
\node[] at (0.3,-0.6) {$a$};
\node[] at (1.7,-0.6) {$b$};

\node[draw, rectangle, minimum height=1.8em, minimum width=1em] (V) at (0,0) {$\alpha_{\epsilon}$};
\node[draw, rectangle, minimum height=1.8em, minimum width=4em] (W) at (1,0) {$\beta_{\epsilon}$};

\draw (1.5,0.315) -- (4.5,0.315);
\draw (1.5,-0.315) -- (4.5,-0.315);

\draw[dashed] (0.3,0.0) -- (0.3,1.2);
\draw[dashed] (1.7,0.0) -- (1.7,1.2);


\node[draw, rectangle, minimum height=0.3em, minimum width=1.6em] at (0.2,0.6) {};
\node[draw, rectangle, minimum height=0.3em, minimum width=1.5em] at (0.3,0.9) {};

\node[draw, fill=blue, rectangle, minimum height=0.3em, minimum width=0.3em] at (0.42,0.6) {};
\node[draw, fill=blue, rectangle,minimum height=0.3em, minimum width=0.8em] at (0.45,0.9) {};

\node[draw, rectangle, minimum height=0.3em, minimum width=0.7em] at (0.75,0.6) {};
\node[draw, rectangle, minimum height=0.3em, minimum width=1.0em] at (1.1,0.6) {};
\node[draw, rectangle, minimum height=0.3em, minimum width=1.0em] at (1.3,0.9) {};
\node[draw, rectangle, minimum height=0.3em, minimum width=1.0em] at (0.9,0.9) {};

\node[draw, rectangle, minimum height=0.3em, minimum width=1.4em] at (1.65,0.6) {};
\node[draw, rectangle, minimum height=0.3em, minimum width=1.2em] at (1.75,0.9) {};

\node[draw, fill=green, rectangle, minimum height=0.3em, minimum width=0.8em] at (1.85,0.6) {};
\node[draw, fill=green, rectangle,minimum height=0.3em, minimum width=1.0em] at (1.88,0.9) {};



\node[draw, rectangle, minimum height=1.8em, minimum width=1em] (V) at (0,-2) {$\alpha_{\epsilon}$};
\node[draw, rectangle, minimum height=1.8em, minimum width=4em] (W) at (1,-2) {$\beta_{\epsilon}$};
\node[draw, rectangle, minimum height=1.8em, minimum width=4em] (W) at (2.4,-2) {$\beta_{\epsilon}$};
\node[draw, rectangle, minimum height=1.8em, minimum width=4em] (W) at (3.8,-2) {$\beta_{\epsilon}$};

\begin{scope}[yshift=-5.5em]

\node[draw, fill=blue, rectangle, minimum height=0.3em, minimum width=0.3em] at (0.42,0.6) {};
\node[draw, fill=blue, rectangle,minimum height=0.3em, minimum width=0.8em] at (0.45,0.9) {};

\node[draw, rectangle, minimum height=0.3em, minimum width=0.7em] at (0.75,0.6) {};
\node[draw, rectangle, minimum height=0.3em, minimum width=1.0em] at (1.1,0.6) {};
\node[draw, rectangle, minimum height=0.3em, minimum width=1.0em] at (1.3,0.9) {};
\node[draw, rectangle, minimum height=0.3em, minimum width=1.0em] at (0.9,0.9) {};

\node[draw, rectangle, minimum height=0.3em, minimum width=1.4em] at (1.65,0.6) {};
\node[draw, rectangle, minimum height=0.3em, minimum width=1.2em] at (1.75,0.9) {};

\begin{scope}[xshift=4em]
\node[draw, fill=blue, rectangle, minimum height=0.3em, minimum width=0.3em] at (0.42,0.6) {};
\node[draw, fill=blue, rectangle,minimum height=0.3em, minimum width=0.8em] at (0.45,0.9) {};

\node[draw, rectangle, minimum height=0.3em, minimum width=0.7em] at (0.75,0.6) {};
\node[draw, rectangle, minimum height=0.3em, minimum width=1.0em] at (1.1,0.6) {};
\node[draw, rectangle, minimum height=0.3em, minimum width=1.0em] at (1.3,0.9) {};
\node[draw, rectangle, minimum height=0.3em, minimum width=1.0em] at (0.9,0.9) {};

\node[draw, rectangle, minimum height=0.3em, minimum width=1.4em] at (1.65,0.6) {};
\node[draw, rectangle, minimum height=0.3em, minimum width=1.2em] at (1.75,0.9) {};
\end{scope}

\begin{scope}[xshift=8em]
\node[draw, fill=blue, rectangle, minimum height=0.3em, minimum width=0.3em] at (0.42,0.6) {};
\node[draw, fill=blue, rectangle,minimum height=0.3em, minimum width=0.8em] at (0.45,0.9) {};

\node[draw, rectangle, minimum height=0.3em, minimum width=0.7em] at (0.75,0.6) {};
\node[draw, rectangle, minimum height=0.3em, minimum width=1.0em] at (1.1,0.6) {};
\node[draw, rectangle, minimum height=0.3em, minimum width=1.0em] at (1.3,0.9) {};
\node[draw, rectangle, minimum height=0.3em, minimum width=1.0em] at (0.9,0.9) {};

\node[draw, rectangle, minimum height=0.3em, minimum width=1.4em] at (1.65,0.6) {};
\node[draw, rectangle, minimum height=0.3em, minimum width=1.2em] at (1.75,0.9) {};
\end{scope}
\end{scope}
}
\newtheorem{claim}[theorem]{Claim}
\title{Nested Weighted Limit-Average Automata of Bounded Width}
\author[1]{Krishnendu Chatterjee}
\author[1]{Thomas A. Henzinger}
\author[2]{Jan Otop}
\affil[1]{IST Austria\\
\texttt{\{krish.chat,tah\}@ist.ac.at}}
\affil[2]{University of Wrocław\\
\texttt{jotop@cs.uni.wroc.pl}}
\begin{document}

\maketitle



\begin{abstract}
While weighted automata provide a natural framework to express 
quantitative properties, many basic properties like average
response time cannot be expressed with weighted automata.
Nested weighted automata extend weighted automata and consist
of a master automaton and a set of slave automata that are invoked
by the master automaton. 
Nested weighted automata are strictly more expressive than 
weighted automata (e.g., average response time can be expressed
with nested weighted automata), but the basic decision questions 
have higher complexity (e.g., for deterministic automata,
the emptiness question for nested weighted automata is $\PSPACE$-hard, 
whereas the corresponding complexity for weighted automata is $\PTIME$).
We consider a natural subclass of nested weighted automata where
at any point at most a bounded number $k$ of slave automata can be 
active. 
We focus on automata whose master value function is the limit average.
We show that these nested weighted automata with bounded width 
are strictly more expressive than weighted automata
(e.g., average response time with no overlapping requests can be 
expressed with bound $k=1$, but not with non-nested weighted automata).
We show that the complexity of the basic decision problems (i.e., emptiness
and universality) for the subclass with $k$ constant matches the complexity 
for weighted automata.
Moreover, when $k$ is part of the input given in unary we establish 
$\PSPACE$-completeness.
\end{abstract}

\section{Introduction}

\noindent{\em Traditional to quantitative verification.}
In contrast to the traditional view of formal verification that focuses 
on Boolean properties of systems, such as ``every request is eventually 
granted'', quantitative specifications consider properties like ``the long-run 
average success rate of an operation is at least one half'' or 
``the long-run average response time is below a threshold.''
Such properties are crucial for performance related properties,
for resource-constrained systems, such as embedded systems, and 
significant attention has been devoted to 
them~\cite{Droste:2009:HWA:1667106,Chatterjee08quantitativelanguages,DBLP:journals/corr/abs-1007-4018,DrosteR06,DBLP:conf/lics/AlurDDRY13}.

\smallskip\noindent{\em Weighted automata.}
A classical model to express quantitative properties is {\em weighted
automata} that extends finite automata where every transition is
assigned a rational number called a \emph{weight}. 
Each run results in a sequence of weights, and a \emph{value function} aggregates
the sequence into a single value.
For non-deterministic weighted automata, the value of a word
is the infimum value of all runs over the word.  
Weighted automata provide a natural and flexible framework to 
express quantitative\footnote{We use the term ``quantitative'' in a
non-probabilistic sense, which assigns a quantitative value to each
infinite run of a system, representing long-run average or maximal
response time, or power consumption, or the like, rather than taking a
probabilistic average over different runs.}
properties~\cite{Chatterjee08quantitativelanguages}.  
Weighted automata have been studied over finite words with weights 
from a semiring~\cite{Droste:2009:HWA:1667106},
and extended to infinite words with limit averaging or supremum as 
a value function~\cite{Chatterjee08quantitativelanguages,DBLP:journals/corr/abs-1007-4018,Chatterjee:2009:AWA:1789494.1789497}. 
While weighted automata over semirings can express several 
quantitative properties~\cite{DBLP:journals/jalc/Mohri02}, 
they cannot express long-run average properties that weighted automata 
with limit averaging can~\cite{Chatterjee08quantitativelanguages}.
However, even weighted automata with limit averaging cannot express 
the basic quantitative property of average response 
time~\cite[Example~5]{nested}.

\smallskip\noindent{\em Nested weighted automata.}
To express properties like average response time, weighted automata were 
extended to \emph{nested weighted automata (NWA)}~\cite{nested}. 
An NWA consists of a master automaton and a set 
of slave automata. The master automaton runs over infinite input words.
At every transition the master automaton can invoke a slave automaton that runs 
over a finite subword of the infinite word, starting at the position where 
the slave automaton is invoked.
Each slave automaton terminates after a finite number of steps and returns 
a value to the master automaton. 
Each slave automaton is equipped with a value function for finite words, 
and the master automaton aggregates the returned values from slave automata 
using a value function for infinite words.  
For Boolean finite automata, nested automata are as expressive as the 
non-nested counterpart, whereas NWA are strictly more 
expressive than non-nested weighted automata~\cite{nested}.
It has been shown in~\cite{nested} that NWA provide a 
specification framework where many basic quantitative properties, 
which cannot be expressed by weighted automata, can be expressed easily, 
and it provides a natural framework to study quantitative run-time 
verification.

\smallskip\noindent{\em The basic decision questions.}
We consider the basic automata-theoretic decision questions 
of emptiness and universality. 
The importance of these basic questions in the weighted automata
setting are as follows:
(1)~Consider a system modeled by a finite-automata recognizing 
traces of the system and a quantitative property given as a weighted
automaton or NWA. 
Then whether the worst-case (resp., best-case) behavior has the value 
at least $\lambda$ is the emptiness (resp., universality) question 
on the product.
(2)~Problems related to model measuring (that generalizes model checking)
and model repair also reduces to the emptiness problem~\cite{modelmeasuring,nested}.

\smallskip\noindent{\em Complexity gap.} 
In this work we focus on the following classical value functions:
$\flimavg$ for infinite words, which is the long-run average property; 
and $\fsum,\fsum^+$ (where $\fsum^+$ is the sum of absolute values) for 
finite words.
While NWA are strictly more expressive than weighted automata, 
the complexity of the decision questions are either unknown or considerably higher. 
Table~\ref{tab1} (non-bold-faced results) summarizes the existing results 
for weighted automata~\cite{Chatterjee08quantitativelanguages} and NWA~\cite{nested}, for example, for NWA for $\fsum^+$ the 
known bounds are $\EXPSPACE$ and $\PSPACE$-hard, and for $\fsum$ even the decidability
of the basic decision questions is open (or undecidable).
Thus, a fundamental question is whether there exist sub-classes of 
NWA that are strictly more expressive than
weighted automata and yet have better complexity than general
NWA.
We address this question in this paper.

\smallskip\noindent{\em Nested weighted automata with bounded width.}
For NWA, let the maximum number of slave automata that can be active at any point 
be the {\em width} of the automaton. 
In this work we consider a natural special class of NWA,
namely, NWA with bounded width, i.e., 
NWA where at any point at most $k$ slave automata 
can be active. 
For example, the average response time with bounded number of requests
pending at any point can be expressed as NWA with 
bounded width, but not with weighted automata.
Moreover, the class of NWA with bounded width is 
equivalent to automata with monitor counters~\cite{ChatterjeeHO15}, which are automata equipped 
with counters, where at each transition, a counter can be started, terminated, 
or the value of the counter can be increased or decreased. 
The transitions do not depend on the counter values, and hence they are 
referred to as monitor counters. 
The values of the counters when they are terminated gives rise to 
the sequence of weights, which is aggregated into a single value
with the $\flimavg$ value function (see \cite{ChatterjeeHO15}). 
Automata with monitor counters are similar in spirit with the class
of register automata of~\cite{DBLP:conf/lics/AlurDDRY13}.

\begin{table}[t]
\centering
\def\tabcolsep{7pt}
{\small
\begin{tabular}{|c|c|c|c|}
\hline 
 & Deterministic & Nondeterministic & Nondeterministic \\
 & (Emptiness/Universality) & Emptiness & Universality \\
\hline
Weighted aut. & \multicolumn{2}{c|}{$\PTIME$} & Undecidable \\
\hline
NWA & \multicolumn{2}{c|}{$\EXPSPACE$, $\PSPACE$-hard} &  \\
$(\flimavg,\fsum^+)$     & \multicolumn{2}{c|}{{\bf ${\PTIME}$ (width $k$ is constant)}} &  Undecidable\\
			& \multicolumn{2}{c|}{\bf $\PSPACE$-c.  (bounded width)}    & \\ 
\hline 
NWA & \multicolumn{2}{c|}{Open}  & \\
$(\flimavg,\fsum)$     & \multicolumn{2}{c|}{\bf $\PTIME$ (width $k$ is constant)} & Undecidable \\
			& \multicolumn{2}{c|}{\bf $\PSPACE$-c. (bounded width)}      & \\ 
\hline
\end{tabular}
}
\caption{Decidability and complexity of emptiness and universality for weighted and nested weighted
automata with $\flimavg$ value function and $\fsum$ and $\fsum^+$ value 
function for slave automata. 
Our results are bold faced. 
Moreover all $\PTIME$ results become $\NLOGSPACE$-complete when the weights are specified in unary.
}
\label{tab1}
\vspace{-2em}
\end{table}

\noindent{\em Our contributions.} 
Our contributions are as follows (summarized as bold-faced results in Table~\ref{tab1}): 
\begin{compactenum}
\item {\em Constant width.} 
We show that the emptiness problem (resp., the emptiness and the universality problems) 
for non-deterministic (resp., deterministic) NWA with 
constant width (i.e., $k$ is constant) can be solved in polynomial time
and is $\NLOGSPACE$-complete when the weights are specified in unary. 
Thus we achieve the same complexity as weighted automata for a much more
expressive class of quantitative properties.

\item {\em Bounded width.}
We show that the emptiness problem (resp., the emptiness and the universality problems) 
for non-deterministic (resp., deterministic) NWA with 
bounded width (i.e., $k$ is part of input given in unary) is $\PSPACE$-complete.
Thus we establish precise complexity when $k$ is part of input given in unary.

\item {\em Deciding width.} 
We show that checking whether a given NWA has width $k$ can be solved in polynomial time for constant $k$
and in $\PSPACE$ if $k$ is given in the input (Theorem~\ref{th:checkingWidth}).
\end{compactenum}

\noindent{\em Technical contributions.}
Our main technical contributions for deterministic $(\flimavg;\fsum)$-automata are as follows. 
\begin{compactenum}
\item {\em Infinite infimum}. 
We first identify a necessary and sufficient condition for the infimum value over all words 
to be~$-\infty$, and show that this condition can be checked efficiently.

\item {\em Lasso-approximation}. 
We show that if the above condition does not hold, then the infimum over all words can be 
approximated by lasso words, i.e., words of the form $v u^{\omega}$. 
Moreover, we show that the infimum value is achieved with words where the slave automata
runs for short length relative to the point of the invocation, and hence the partial averages converge.

\item {\em Reduction to width $1$}. Using the lasso-approximation we reduce the emptiness 
problem of width bounded by $k$ to the corresponding problem 
of width $1$. 
We show that the case of width~1 can be solved using standard techniques. 
\end{compactenum}

\smallskip\noindent{\em Related works.}
Weighted automata over finite words have been extensively studied, 
the book~\cite{Droste:2009:HWA:1667106} provides an excellent collection 
of results.
Weighted automata on infinite words have been studied 
in~\cite{Chatterjee08quantitativelanguages,DBLP:journals/corr/abs-1007-4018,DrosteR06}.
The extension to weighted automata with monitor counters over finite words has been 
considered as cost register automata in~\cite{DBLP:conf/lics/AlurDDRY13}.
A version of nested weighted automata over finite words has been 
studied in~\cite{bollig2010pebble}, and nested weighted automata over 
infinite words has been studied in~\cite{nested}.
Several quantitative logics have also been studied, 
such as~\cite{BokerCHK14,BouyerMM14,AlmagorBK14}. 
In this work we consider a subclass of nested weighted automata which is 
strictly more expressive than weighted automata yet achieve the same complexity 
for the basic decision questions.
Probabilistic models (such as Markov decision processes) with quantitative properties 
(such as limit-average or discounted-sum) have also been extensively studied for 
single objectives~\cite{filar,Puterman}, and for multiple objectives and their 
combinations~\cite{CMH06,Cha07,CFW13,BBCFK11,CKK15,Forejt,FKN11,CD11,Baier-CSL-LICS-1,Baier-CSL-LICS-2}.
While NWA with bounded width have been studied under probabilistic semantics~\cite{ChatterjeeHO15},
the basic automata theoretic decision problems have not been studied for them.


\section{Preliminaries}
\label{s:preliminaries}
\newcommand{\expected}{\mathbb{E}}
\newcommand{\distrib}{\mathbb{D}}
\newcommand{\prob}{\mathbb{P}}
\newcommand{\threshold}{\boldsymbol{\theta}}
\newcommand{\markov}{\mathcal{M}}
\newcommand{\run}{\pi}
\newcommand{\weightedRun}{\pi^W}
\newcommand{\calU}{\mathcal{U}}

\newcommand{\blank}{\texttt{\#}}
\newcommand{\delim}{\ensuremath{\$}}

\subsection{Words and automata}

\Paragraph{Words}.
We consider a finite \emph{alphabet} of letters $\Sigma$.
A \emph{word} over $\Sigma$ is a (finite or infinite) sequence of letters from $\Sigma$.
We denote the $i$-th letter of a word $w$ by $w[i]$, and for $i < j$ we
have that $w[i,j]$ is the word $w[i] w[i+1] \ldots w[j]$.
The length of a finite word $w$ is denoted by $|w|$; and the length of an infinite word 
$w$ is $|w| = \infty$.
For an infinite word $w$, thus $w[i,\infty]$ is the suffix of the word 
with first $i-1$ letters removed.
 

\Sskip
\Paragraph{Labeled automata}. For a set $X$, an \emph{$X$-labeled automaton} $\aut$ is a tuple
$\tuple{\Sigma, Q, Q_0, \delta, F, \cost}$, where
(1)~$\Sigma$ is the alphabet, 
(2)~$Q$ is a finite set of states, 
(3)~$Q_0 \subseteq Q$ is the set of initial states, 
(4)~$\delta \subseteq Q \times \Sigma \times Q$ is a transition relation,
(5)~$F$ is a set of accepting states,
and 
(6)~$\cost : \delta \mapsto X$ is a labeling function.
A labeled automaton $\tuple{\Sigma, Q, q_0, \delta, F, \cost}$ is 
\emph{deterministic} if and only if 
$\delta$ is a function from $Q \times \Sigma$ into $Q$ 
and $Q_0$ is a singleton. 
In definitions of deterministic labeled automata we omit curly brackets in the description of $Q_0$
and write $\tuple{\Sigma, Q, q_0, \delta, F, \cost}$.

\Sskip
\Paragraph{Semantics of (labeled) automata}. 
A \emph{run} $\run$ of a (labeled) automaton $\aut$ on a word $w$ is a sequence of states
of $\aut$ of length $|w|+1$  
such that $\run[0]$ belong to the initial states of $\aut$
and for every $0 \leq i \leq |w|-1$ we have $(\pi[i], w[i], \pi[i+1])$  is a transition of $\aut$.
A run $\pi$ on a finite word $w$ is \emph{accepting} iff the last state $\pi[|w|]$ of the run 
is an accepting state of $\aut$.
A run $\pi$ on an infinite word $w$ is \emph{accepting} iff some accepting state of $\aut$ occurs
infinitely often in $\pi$. 
For an automaton $\aut$ and a word $w$, we define $\Acc(w)$ as the set of accepting runs on $w$.
Note that for deterministic automata, every word $w$ has at most one accepting run ($|\Acc(w)| \leq 1$).

\Sskip
\Paragraph{Weighted automata}.
A \emph{weighted automaton} is a $\Z$-labeled automaton, where $\Z$ is the set of integers. 
The labels are called \emph{weights}. 

\Sskip
\Paragraph{Semantics of weighted automata}. 
We define the semantics of weighted automata in two steps. First, we define the value of a 
run. Second, we define the value of a word based on the values of its runs.
To define values of runs, we will consider  \emph{value functions} $f$ that 
assign real numbers to sequences of rationals.
Given a non-empty word $w$, every run $\pi$ of $\aut$ on $w$ defines a sequence of weights 
of successive transitions of $\aut$, i.e., 
$\cost(\pi)=(\cost(\pi[i-1], w[i], \pi[i]))_{1\leq i \leq |w|}$; 
and the value $f(\pi)$ of the run $\pi$ is defined as $f(\cost(\pi))$.
We denote by $(\cost(\pi))[i]$ the weight of the $i$-th transition,
i.e., $\cost(\pi[i-1], w[i], \pi[i])$.
The value of a non-empty word $w$ assigned by the automaton $\aut$, denoted by  $\valueL{\aut}(w)$,
is the infimum of the set of values of all {\em accepting} runs;
i.e., $\inf_{\pi \in \Acc(w)} f(\pi)$, and we have the usual semantics that infimum of an
empty set is infinite, i.e., the value of a word that has no accepting run is infinite.
Every run $\pi$ on an empty word has length $1$ and the sequence $\cost(\pi)$ is empty, hence 
we define the value $f(\pi)$ as an external (not a real number) value $\bot$. 
Thus, the value of the empty word is either $\bot$, if the empty word is accepted by $\aut$, or $\infty$ 
otherwise.
To indicate a particular value function $f$ that defines the semantics,
we will call a weighted automaton $\aut$ an $f$-automaton. 

\Sskip
\Paragraph{Value functions}.
For finite runs we consider the following classical value functions: for runs of length $n+1$ we have
\begin{compactitem}
\item \emph{Sum, absolute sum:} the sum function 
$\fsum(\pi) = \sum_{i=1}^{n} (\cost(\pi))[i]$, 
the absolute sum 
$\fsum^+(\pi) = \sum_{i=1}^{n} \abs((\cost(\pi))[i])$, where $\abs(x)$  is the absolute value of $x$,
\end{compactitem}
For infinite runs we consider:
\begin{compactitem}
\item {\em Limit average:} $\flimavg(\pi) = \liminf\limits_{k \rightarrow \infty} \frac{1}{k} \cdot \sum_{i=1}^{k} (\cost(\pi))[i]$.
\end{compactitem}

\Sskip
\Paragraph{Silent moves}. Consider a $(\Z \cup \{ \bot\})$-labeled automaton. We can consider such an automaton as an extension
of a weighted automaton in which transitions labeled by $\bot$ are \emph{silent}, i.e., they do not contribute to 
the value of a run. Formally, for every function $f \in \InfVal$ we define
$\silent{f}$ as the value function that applies $f$ on sequences after removing $\bot$ symbols.
The significance of silent moves is as follows: it allows to ignore transitions, and thus provide
robustness where properties could be specified based on desired events rather than steps.

\subsection{Nested weighted automata}
In this section we describe nested weighted automata introduced in~\cite{nested},
and closely follow the description of~\cite{nested}.
For more details and illustration of such automata we refer the reader 
to~\cite{nested}.
We start with an informal description.

\smallskip\noindent{\em Informal description.}
A \emph{nested weighted automaton} (NWA) consists of a labeled automaton over infinite words, 
called the \emph{master automaton}, a value function $f$ for infinite words,  
and a set of weighted automata over finite words, called \emph{slave automata}. 
A nested weighted automaton can be viewed as follows: 
given a word, we consider the run of the master automaton on the word,
but the weight of each transition is determined by dynamically running 
slave automata; and then the value of a run is obtained using the 
value function $f$.
That is, the master automaton proceeds on an input word as an usual automaton, 
except that before it takes a transition, it can start a slave automaton 
corresponding to the label of the current transition. 
The slave automaton starts at the current position of the word of the master automaton 
and works on some finite part of the input word. Once a slave automaton finishes,
it returns its value to the master automaton, which treats the returned
value as the weight of the current transition that is being executed.
Note that for some transitions the master automaton might not invoke any 
slave automaton, and which corresponds to \emph{silent} transitions.
If one of slave automata rejects, the nested weighted automaton rejects.
We first present an example and then the formal definition.

\begin{example}[Average response time]
\label{ex:ART}
Consider an alphabet $\Sigma$ consisting of requests $r$, grants $g$, 
and null instructions $\blank$.
The average response time (ART) property asks for the average number of 
instructions between any request and the following grant. 
An NWA computing the average response time is depicted in Fig.~\ref{fig:ART}.
At every position with letter $r$ the master automaton $\masterA$ of 
$\nestedA$ invokes the slave automaton $\slaveA_1$, which computes the number 
of letters from its initial position to the first following grant. 
The automaton $\slaveA_1$ is a $\fsum^+$-automaton.
On letters $\blank$ and $g$ the automaton $\masterA$ invokes the slave 
automaton $\slaveA_2$, which is a dummy automaton, i.e., it immediately accepts 
and returns no weight. 
Invoking such a dummy automaton corresponds to taking a silent transition.
Thus, the sequence of values returned by slave automata ($5 4 3 1 1 \ldots$ 
in Fig.~\ref{fig:ART}), is the sequence of response times for each request. 
Therefore, the averages of these values is precisely the average response time;
in the NWA $\nestedA$, the value function $f$ is $\flimavg$.
Also this property cannot be expressed by a non-nested automaton:
a quantitative property is a function from words to reals, and 
as a function the range of non-nested $\flimavg$-automata is bounded,
whereas the ART can have unbounded values (for details see~\cite{nested}).
\begin{figure}[h]
\begin{tikzpicture}
\begin{scope}[xshift=2cm,yshift=3cm]
\tikzSlaveA
\node at (0.8,-0.5) {$\slaveA_1$};
\end{scope}
\begin{scope}[xshift=0cm,yshift=0cm]
\tikzMasterOne
\node at (0,0.7) {$\masterA$};
\end{scope}

\node[circle,draw, minimum size=0.75cm] (S1) at (2.8,1.5){};
\node[circle,draw, minimum size=0.5cm] (S0) at (2.8,1.5){};

\draw[dashed,->] (E1) to (P1);
\draw[dashed,->] (E2) to (S1);

\node at (2.8,0.8) {$\slaveA_2$};

\begin{scope}[xshift=8cm,yshift=0cm]
\tikzOfSlaveRun
\end{scope}
\end{tikzpicture}

\caption{An NWA $\nestedA$ computing ART. The master automaton $\masterA$ 
and slave automata $\slaveA_1, \slaveA_2$ are on the left.
A part of a run of $\nestedA$ on word $rrr\blank rgrg\ldots$ is presented on 
the right. }
\label{fig:ART}
\end{figure}
\end{example}


\Sskip
\Paragraph{Nested weighted automata}. 
A \emph{nested weighted automaton} (NWA) is a tuple $\tuple{\masterA; f; \slaveA_1, \ldots, \slaveA_l}$, where
(1)~$\masterA$, called the \emph{master automaton}, is a $\{1, \ldots, k\}$-labeled automaton over infinite words 
(the labels are the indexes of automata $\slaveA_1,  \ldots, \slaveA_l$), 
(2)~$f$ is a value function on infinite words, called the \emph{master value function}, and
(3)~$\slaveA_1, \ldots, \slaveA_l$ are weighted automata over finite words called \emph{slave automata}.
Intuitively, an NWA can be regarded as an $f$-automaton whose weights are dynamically computed at every step by a corresponding slave automaton.
We define an \emph{$(f;g)$-automaton} as an NWA where the master value function is $f$ and all slave automata are $g$-automata.

\Sskip
\Paragraph{Semantics: runs and values}.
A \emph{run} of $\nestedA$ on an infinite word $w$ is an infinite sequence 
$(\masterRun, \slaveRun_1, \slaveRun_2, \ldots)$ such that 
(1)~$\masterRun$ is a run of $\masterA$ on $w$;
(2)~for every $i>0$ we have $\slaveRun_i$ is a run of the automaton $\slaveA_{\cost(\masterRun[i-1], w[i], \masterRun[i])}$,
referenced by the label $\cost(\masterRun[i-1], w[i], \masterRun[i])$ of the master automaton, on some finite word of $w[i,j]$.
The run $(\masterRun, \slaveRun_1, \slaveRun_2, \ldots)$ is \emph{accepting} if all 
runs $\masterRun, \slaveRun_1,  \slaveRun_2, \ldots$ are accepting (i.e., $\masterRun$ satisfies its acceptance 
condition and each $\slaveRun_1,\slaveRun_2, \ldots$ ends in an accepting state)
and infinitely many runs of slave automata have length greater than $1$ (the master automaton takes infinitely many non-silent transitions).
The value of the run $(\masterRun, \slaveRun_1, \slaveRun_2, \ldots)$ is defined as 
$\silent{f}( v(\pi_1) v(\pi_2) \ldots)$, where $v(\pi_i)$ is the value of the run $\pi_i$ in 
the corresponding slave automaton.
The value of a word $w$ assigned by the automaton $\nestedA$, denoted by  
$\valueL{\nestedA}(w)$, is the infimum of the set of values of all {\em accepting} runs.
We require accepting runs to contain infinitely many non-silent transitions because
$f$ is a value function over infinite sequences, so we need 
the sequence $v(\pi_1) v(\pi_2) \ldots$ with $\bot$ symbols removed to be infinite.

\Paragraph{Deterministic nested weighted automata}. An NWA $\nestedA$ is \emph{deterministic} if (1)~the master automaton 
and all slave automata are deterministic, and (2)~slave automata recognize prefix-free languages, i.e.,
languages $\lang$ such that if $w \in \lang$, then no proper extension of $w$ belongs to $\lang$.
Condition (2) implies that no accepting run of a slave automaton visits an accepting state twice.
Intuitively, slave automata have to accept the first time they encounter an accepting state as 
they will not see an accepting state again.


\begin{definition}[Width of NWA] 
An NWA has \emph{width} $k$ if and only if
in every  run at every 
position at most $k$ slave automata are active.
\end{definition}

\begin{example}[Non-overlapping ART]
\label{ex:ARTOne}
Consider the NWA $\nestedA$ from Example~\ref{ex:ART} depicted in Fig.~\ref{fig:ART},
which does not have bounded width. The run in Fig.~\ref{fig:ART} has width at 
least $4$, but on word $rgr^2gr^3g \ldots$ the number of active slave automata 
at position of letter $g$ in subword $r^i g$ is $i$. 
We consider a variant of the ART property, called the $1$-ART property, where
after a request till it is granted additional requests are not considered.
Formally, we consider the ART property over the language $\lang_{1}$ defined by 
$(r\blank^*g\blank^*)^{\omega}$ (equivalently, given a request, the automata can check
if the slave automaton is not active, and only then invoke it).
An NWA $\nestedA_1$ computing the ART property over $\lang_{1}$ is obtained from the NWA 
from Fig.~\ref{fig:ART} by taking the product of the master automaton $\masterA$ (from Fig.~\ref{fig:ART})
with an automaton recognizing the language $\lang_{1}$. 
The automaton $\nestedA_1$, as well as, $\nestedA$ from Example~\ref{ex:ART} are 
$(\flimavg;\fsum^+)$-automata and they are deterministic.
Indeed, the master automaton and the slave automata of $\nestedA_1$ (resp., $\nestedA$) are deterministic 
and the slave automata recognize prefix-free languages.
Moreover, in any (infinite) run of $\nestedA_1$ at most one slave automaton is active, i.e., 
$\nestedA_1$ has width $1$. 
The dummy slave automata do not increase the width as they immediately accept, 
and hence they are not considered as active even at the position they are invoked.
Finally, observe that the $1$-ART property can return unbounded values, which implies that 
there exists no (non-nested) $\flimavg$-automaton expressing it.
\end{example}

\Paragraph{Decision problems}. The classical questions in automata theory are language \emph{emptiness} and \emph{universality}.
These problems have their counterparts in the quantitative setting of weighted automata and NWA.
The (quantitative) emptiness and universality problems are defined in the same way for weighted automata and NWA; 
in the following definition the automaton $\aut$ can be either a weighted automaton or an NWA.
\begin{itemize}
\item \textbf{Emptiness}: Given an automaton $\aut$ and a threshold $\lambda$, decide whether there exists a word $w$ with
$\lang_\aut(w) \leq \lambda$.
\item \textbf{Universality}: Given an automaton $\aut$ and a threshold $\lambda$, decide whether for every word $w$ we have 
$\lang_\aut(w) \leq \lambda$.
\end{itemize}
The universality question asks for \emph{non-existence} of a word $w$ such that $\lang_\aut(w) > \lambda$.

\begin{remark}
In this work we focus on value functions $\fsum$ and $\fsum^+$ for finite words, 
and $\flimavg$ for infinite words.
There are other value functions for finite words, such as $\fmax,\fmin$ and bounded
sum.  
However, it was shown in~\cite{nested} that for these value functions, there is a reduction to
non-nested weighted automata.
Also for infinite words, there are other value functions such as $\fsup,\flimsup$,
where the complexity and decidability results have been established in~\cite{nested}.
Hence in this work we focus on the most conceptually interesting case of 
$\flimavg$ function for master automaton and the $\fsum$ and $\fsum^+$ value functions
for the slave automata.
\end{remark}

\section{Examples}
\label{s:examples}
In this section we present several examples of properties of interest that can be
specified with NWA of bounded width.

\begin{example}[Variants of ART ]
\label{ex:ARTtypes}
Recall the ART property~(Example~\ref{fig:ART})~and its variant $1$-ART property~(Example~\ref{ex:ARTOne}).
We present two variants of the ART property. 

First, we extend Example~\ref{ex:ARTOne} and consider the $k$-ART property over languages 
$L_{k}$ defined by $(\blank^*r (\blank^*r\blank^*)^{\leq k-1}  g\blank^*)^{\omega}$, i.e.,
the language where there are at most $k$-pending requests before each grant. 
As Example~\ref{ex:ARTOne}, an NWA $\nestedA_k$ computing the $k$-ART property can be constructed 
from the NWA from Fig.~\ref{fig:ART} by taking the product of the master automaton $\masterA$ 
(from Fig.~\ref{fig:ART}) with an automaton recognizing $\lang_{k}$. 
The NWA $\nestedA_k$ has width $k$.

Second, we consider the $1$-ART$[k]$ property, where $\Sigma = \{ r_i,g_i : i \in \{1,\ldots, k\}\} \cup \{\blank \}$, 
i.e., there are $k$-different types of ``request-grant'' pairs. The $1$-ART$[k]$ property
asks for the average number of instructions between any request and the following grant of the corresponding type.
Moreover, we consider as for $1$-ART property that for every $i$, between a request $r_i$ and the following grant of the corresponding type 
$g_i$, there is no request $r_i$ of the same type. 
The  $1$-ART$[k]$ can be expressed with an $(\flimavg;\fsum^+)$-automaton $\nestedA_{1}^{[k]}$ of width bounded by $k$, 
which is similar to $\nestedA_1$ from Example~\ref{ex:ARTOne}. 
Basically, the NWA $\nestedA_{1}^{[k]}$ has $k$ slave automata; for $i \in \{1,\ldots,k\}$
the slave automaton $\slaveA_i$ is invoked on letters $r_i$ and it counts 
the number of steps to the following grant $g_i$.
Additionally, the master automaton checks that for every $i$, between any two grants $g_i$, there is at most one request $r_i$. 
\end{example}

In Examples~\ref{ex:ART},~\ref{ex:ARTOne}, and~\ref{ex:ARTtypes} we presented properties 
that can be expressed with $(\flimavg;\fsum^+)$-automata. 
The following property of \emph{average excess} can be expressed with slave automata with $\fsum$ value functions 
that have both positive and negative weights, i.e., it can be expressed by an $(\flimavg;\fsum)$-automaton, 
but not $(\flimavg;\fsum^+)$-automata.

\begin{example}[Block difference]
\newcommand{\nestedAE}{\nestedA_{\textrm{AE}}}
Consider the alphabet $\{ r,g,\blank \}$ from Example~\ref{ex:ART} with an additional letter $\delim$. 
The \emph{average excess} (AE) property asks for the average difference between requests and grants 
over blocks separated by $\delim$. For example, for $\delim (r r \# g \delim)^{\omega}$ the average excess 
is~$1$.
The AE property can be expressed by $(\flimavg;\fsum)$-automaton $\nestedAE$ of width $1$ (presented below), 
but it cannot be expressed with $(\flimavg;\fsum^+)$-automata; $(\flimavg;\fsum^+)$-automata return values 
form the interval $[0,\infty)$, while AE ranges from $(-\infty, \infty)$.
The automaton $\nestedAE$ invokes a slave automaton $\slaveA_1$ at positions of letter $\delim$ 
and a dummy automaton $\slaveA_2$ on the remaining positions. 
The slave automaton $\slaveA_1$ runs until it sees $\delim$ letter; 
it computes the difference between $r$ and $g$ letters by taking transitions of weights 
$1, -1, 0$ respectively on letters $r,g,\blank$.
The master automaton as well as the slave automata of $\nestedAE$ are deterministic and 
the slave automata recognize prefix-free languages.
Therefore, the NWA $\nestedAE$ is deterministic and has width~1.
\end{example}


\section{Our Results}
\label{s:results}
\newcommand{\iterWord}{u}
\newcommand{\optWord}{w}
\newcommand{\conf}{\textsc{Conf}}

In this section we establish our main results. 
We first discuss complexity of checking whether a given NWA has width $k$. 
Next, we comment the results we need to prove. Afterwards, we present our results.

\Paragraph{Configurations}. 
Let $\nestedA$ be a non-deterministic $(\flimavg; \fsum)$-automaton of width $k$.
We define a \emph{configuration} of $\nestedA$ as a tuple 
$(q; q_1, \ldots, q_k)$ where $q$ is a state of the master automaton and 
each $q_1, \ldots, q_k$ is either a state of a slave automaton of $\nestedA$ 
or $\bot$. 
In the sequence $q_1, \ldots, q_k$ each state corresponds to one slave 
automaton, and the states are ordered w.r.t.\ the position when the 
corresponding slave automaton has been invoked, i.e., $q_1$ correspond to the 
least recently invoked slave automaton. 
If there are less than $k$ slave automata active, then $\bot$ symbols follow 
the actual states (denoting there is no slave automata invoked).
We define $\conf(\nestedA)$ as the number of configurations of $\nestedA$.

\noindent \emph{Key ideas}. 
NWA without weights are equivalent to \buchi{} automata~\cite{nested}.
The property of having width $k$ is independent from weights. It can be decided with 
a constriction of a (non-weighted) \buchi{} automaton, which tracks configurations 
$(q; q_1, \ldots, q_k)$ of a given NWA (assuming that is has width $k$) and accepts only if the 
width-$k$ condition is at some point violated.

\begin{restatable}{theorem}{checkingWidth}
\label{th:checkingWidth}
(1)~Fix $k>0$. We can check in polynomial time whether a given NWA has width $k$.
(2)~Given an NWA and a number $k$ given in unary we can check in polynomial space whether the  NWA has width $k$.
\end{restatable}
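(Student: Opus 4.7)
The plan is to reduce width-checking of a given NWA $\nestedA$ to non-emptiness of a non-weighted \buchi{} automaton, essentially along the lines sketched in the remark preceding the theorem. I will construct a non-deterministic \buchi{} automaton $B_k$ whose states are extended configurations $(q; q_1, \ldots, q_{k+1})$, where $q$ is a state of $\masterA$ and each $q_i$ is either a state of some slave automaton (tagged with the slave index) or $\bot$, with non-$\bot$ entries kept left-justified and ordered by invocation time. The initial states are $(q_0; \bot, \ldots, \bot)$ for $q_0$ initial in $\masterA$. At each step $B_k$ non-deterministically guesses an input letter $a$, a transition of $\masterA$ from $q$ labeled by some slave index $j$, and, for every currently active slave, either a legal $a$-transition of that slave or (when the slave is in an accepting state) a termination that clears the slot and shifts the remaining entries left; a fresh initial state of $\slaveA_j$ is then inserted in the leftmost free slot. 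If the insertion would occupy all $k+1$ slots simultaneously, $B_k$ moves to an absorbing accepting sink with a self-loop on every letter. The \buchi{} condition is thus satisfied precisely when at some position more than $k$ slaves are simultaneously active.

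Soundness and completeness of the reduction follow from the same configuration-tracing argument that underlies the equivalence of boolean NWA with \buchi{} automata~\cite{nested}: every trace of $B_k$ projects to a valid partial run of $\nestedA$ witnessing at least $k+1$ simultaneously active slaves, and conversely every run of $\nestedA$ with a width-$k$ violation is simulated by some trace of $B_k$ that reaches the sink. Consequently, $\nestedA$ has width $k$ iff the language of $B_k$ is empty. If one insists on restricting ``run'' to mean accepting run of $\nestedA$, one further takes a product of the sink with the \buchi{} condition of $\masterA$ and verifies that each of the $k+1$ slaves present at the violation has an accepting completion on some suffix; this enlargement preserves the asymptotic complexity.

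Letting $m = |\masterA|$ and $n = \sum_{i} |\slaveA_i|$, the number of states of $B_k$ is at most $m \cdot (n+1)^{k+1}$. For part~(1), $k$ is a fixed constant, so $|B_k|$ is polynomial in the input and non-emptiness of a \buchi{} automaton is solvable in polynomial time (in fact in \NLOGSPACE). For part~(2), $k$ is part of the input and given in unary, so $|B_k|$ is exponential in $k$, but each configuration admits a binary description of length $O(k \log(mn))$, which is polynomial in the input. The standard on-the-fly algorithm for \buchi{} non-emptiness then runs in non-deterministic space logarithmic in $|B_k|$, hence polynomial in the input, yielding the \PSPACE{} upper bound by Savitch's theorem. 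The main subtlety, and essentially the only place where care is required, is the invocation-time ordering of the slot vector, which must be preserved under slave termination so that traces of $B_k$ correspond faithfully with runs of $\nestedA$; once this bookkeeping is set up, the remainder of the argument is routine.
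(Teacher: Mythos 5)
Your proposal is correct and follows essentially the same route as the paper: both construct a (non-weighted) \buchi{} automaton over configurations of the NWA that jumps to an accepting sink exactly when a $(k{+}1)$-st slave automaton would become active, and then reduce width-checking to \buchi{} emptiness, solved directly for constant $k$ and on-the-fly in nondeterministic polynomial space (hence \PSPACE{} by Savitch) when $k$ is given in unary. The extra bookkeeping you spell out (invocation-ordered slots, termination shifting, and the optional product with the acceptance conditions) is a more careful rendering of the same construction, not a different argument.
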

\begin{proof}
Let $\nestedA$ be an NWA and let $k>0$ be a tested width. 
Consider a \buchi{} automaton $\nonnestedA$, whose states are configurations of $\nestedA$ (considered to have width $k$) 
and a single accepting state $q_{acc}$.
The automaton $\nonnestedA$ simulates runs of $\nestedA$, i.e., it has a transition from one configuration to another
over letter $a$ if and only if there exist corresponding transitions of the master automaton and slave automata over letter $a$, which result in such a transition of $\nestedA$. This condition can be checked in polynomial time;
for configurations $\tuple{q;q_1, \ldots, q_k}, \tuple{q';q_1', \ldots, q_k'}$ and letter $a$,
we need to check whether $\tuple{q,a,q'}$ is a transition of the master automaton of $\nestedA$
and each transition $\tuple{q_1,a,q_1'}, \ldots, \tuple{q_k,a,q_k'}$ is a transition of (some) slave automaton of $\nestedA$.
Additionally, whenever $\nonnestedA$ is in a state $\tuple{q; q_1, \ldots, q_k}$ such that 
$q_k \neq \bot$, i.e., $k$ slave automata are active, and another slave automaton is invoked, 
$\nonnestedA$ takes a transition to $q_{acc}$, which is a single accepting state in $\nonnestedA$.
Observe that $\nonnestedA$ has an accepting run if and only if $\nestedA$ violates width-$k$ condition. 

The size of $\nonnestedA$ is bounded by $|\nestedA|^k$, i.e., it is 
polynomial in the size of $\nestedA$ and exponential in $k$. 
Therefore, 
we can check emptiness of $\nonnestedA$, and in turn violation of width-$k$ condition, in polynomial time if $k$ is constant 
and $\PSPACE$ if $k$ is given in input in unary.
\end{proof}

\smallskip\noindent{\em Comment.}
We first note that for deterministic automata, emptiness and universality 
questions are similar. 
Hence we focus on the emptiness problem for non-deterministic automata
(which subsumes the emptiness problem for deterministic automata) to 
establish the new results of Table~\ref{tab1}. 
Moreover, the $\fsum^+$ value function is a special case of the $\fsum$ value 
function with only positive weights.
Since our main results are algorithms to establish upper bounds, we will only 
present the result for the emptiness problem for non-deterministic 
$(\flimavg;\fsum)$-automata. 
However, as a first step we show that without loss of generality, we
can focus on the case of deterministic automata.


\begin{restatable}{lemmaStatement}{ExmpinessDeterministic}
Let $k > 0$.
Given a non-deterministic $(\flimavg; \fsum)$-automaton $\nestedA$ over 
alphabet $\Sigma$ of width $k$, a deterministic 
$(\flimavg; \fsum)$-automaton $\nestedA_d$ of width $k$ over an alphabet 
$\Sigma \times \Gamma$ such that $\inf_{\iterWord \in \Sigma^{\omega}} \nestedA(\iterWord) 
= \inf_{\iterWord' \in (\Sigma \times \Gamma)^{\omega}} \nestedA_d(\iterWord')$
can be constructed in time exponential in $k$ and polynomial in $|\nestedA|$.
Moreover, $\conf(\nestedA_d)$ is polynomial in $\conf(\nestedA)$ and $k$ and only the alphabet of $\nestedA_d$ is exponential (in $k$)
as compared to the alphabet of $\nestedA$.
\label{l:deterministic}
\end{restatable}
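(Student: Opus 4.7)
The plan is to apply the standard ``alphabet extension'' determinization, where the non-deterministic choices made along a run of $\nestedA$ are externalized into a secondary alphabet. Let $\Gamma$ be a finite set of \emph{hints}: one symbol $\gamma \in \Gamma$ prescribes, for a single input position, (i)~the master transition to take, (ii)~the index $i \in \{1, \ldots, l\}$ of the slave automaton invoked at this position (or $\bot$ if the transition is silent), (iii)~a slot $j \in \{1, \ldots, k\}$ in which the newly-invoked slave is to be placed, and (iv)~for each of the at most $k$ currently occupied slots, the next transition to be taken by the slave running there together with a Boolean flag signalling whether this step is its terminating (accepting) step. Since at most $k$ slaves can be simultaneously active, $|\Gamma|$ is polynomial in $|\nestedA|$ and exponential in $k$, as required.

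The deterministic NWA $\nestedA_d$ over $\Sigma \times \Gamma$ is then assembled as follows. The master of $\nestedA_d$ has the same state space as the master of $\nestedA$ and, on reading $(a,\gamma)$, takes exactly the master transition of $\nestedA$ named by $\gamma$ \emph{provided} that this transition is legal for letter $a$; otherwise it moves to a sink state that precludes acceptance. It invokes, at most, the single slave named by $\gamma$ into the specified slot. For each pair $(i,j)$ with $1 \le i \le l$ and $1 \le j \le k$, $\nestedA_d$ carries a slave $\slaveA_{i,j}^d$ which is a copy of $\slaveA_i$: it reads $(a,\gamma)$, projects $\gamma$ onto its designated slot $j$, verifies that the prescribed transition is a valid $\slaveA_i$-transition on $a$, and takes it while carrying the same weight as in $\slaveA_i$. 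Crucially, $\slaveA_{i,j}^d$ declares its current state accepting only when the termination flag for slot $j$ is set by $\gamma$, so each slave's language is prefix-free by construction. The width-$k$ condition is preserved, because the hint allows at most one slave to occupy each of the $k$ slots at any position. The state space, and hence $\conf(\nestedA_d)$, is bounded by $\conf(\nestedA)$ times a polynomial factor in $k$ arising from slot bookkeeping.

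Correctness then follows from two mutual simulations. Given any accepting run of $\nestedA$ on $w \in \Sigma^\omega$, we read off its choices to synthesize a hint stream $\gamma \in \Gamma^\omega$; the unique run of $\nestedA_d$ on the interleaved word $w' = (w[1],\gamma[1])(w[2],\gamma[2])\cdots$ is accepting and produces the same sequence of slave values, hence the same $\flimavg$. Conversely, every accepting run of $\nestedA_d$ on some $w' \in (\Sigma \times \Gamma)^\omega$ projects to an accepting run of $\nestedA$ on the $\Sigma$-component of $w'$ with the same value, because each simulated transition was checked to be a genuine $\nestedA$-transition. Taking infima over both sides yields the desired equality. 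The main obstacle is reconciling determinism of slaves with the prefix-free requirement: a nondeterministic slave may visit an accepting state multiple times along one finite word, and a naive determinization would force it to accept at the first visit, cutting off value-relevant suffixes. Embedding the termination choice into the hint $\gamma$ resolves this by shifting the choice of where to stop into the input, while at the same time keeping the slave transition relation deterministic.
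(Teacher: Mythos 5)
Your proposal is correct and follows essentially the same route as the paper: both externalize the non-deterministic choices into a product alphabet $\Sigma \times \Gamma$, and both resolve the same two obstacles --- individually addressing each of the up to $k$ co-active slaves (your slots $j \in \{1,\ldots,k\}$ play exactly the role of the paper's $k$ copies of each slave automaton and its ``controllable runs''), and restoring prefix-freeness of the now-deterministic slaves (your termination flag corresponds to the paper's splitting of each accepting state into a non-accepting copy with outgoing transitions and an accepting copy without). The only difference is notational: the paper encodes the hints as partial functions from a set of states to states, rather than as per-slot transition prescriptions.
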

\ExmpinessDeterministic*
\begin{proof}
The general idea is to extend the alphabet to encode the 
actual letter and auxiliary symbols, which indicate how to
resolve non-determinism.
This can be done by explicitly 
writing down transitions the master and active slave automata should take. 
However, in such a solution, two slave automata which are 
at the same position in the same state, have the same suffixes of their runs, even though
in a non-deterministic automaton their (suffixes) of runs can be different.

To circumvent this problem, we modify the automaton $\nestedA$ to $\nestedA'$ such that 
each slave automaton comes in $k$ copies and the master automaton of $\nestedA'$ can 
invoke any copy of a slave automaton, i.e., if the master automaton of $\nestedA$ has a transition
$(q,a,q',i)$, at which it invokes the slave automaton $\slaveA_i$, 
the master automaton of $\nestedA$ can invoke any copy of $\slaveA_i$.
Clearly, $\nestedA'$ does not have any additional behaviors, i.e., by 
merging copies of slave automata in a run of $\nestedA'$,
 we can obtain a run $\nestedA$. Conversely, 
for every run of $\nestedA$ there exist multiple corresponding runs 
of $\nestedA'$. In particular, for every run of $\nestedA$ there exists
a run of $\nestedA'$ such that at every position, all active slave automata are different.
We call such runs \emph{controllable} as every slave automaton can be controlled independently of the others. 
In the following, we construct a deterministic automaton $\nestedA_d$, which has corresponding run to every controllable run of $\nestedA'$.

Before we describe the construction of $\nestedA_d$, observe that without loss of generality, we can assume that accepting states of
slave automata do not have outgoing transitions. Intuitively, we can clone each accepting state $s$ into two copies $s_1, s_2$, of which 
$s_1$ is a state with the transitions of $s$, but is not accepting, and $s_2$ is accepting but has no outgoing transitions. 
Then, all transitions to $s$ are changed into two transitions, one to $s_1$ and one to $s_2$.

\newcommand{\Qq}{\bm{Q}}

Let $\Qq$ be the union of the set of states of the master automaton and all sets of states of slave automata of $\nestedA'$.
We define $\Gamma$ as the set of partial functions $h$ 
from $(k+1)$-elements subsets of $\Qq$ into $\Qq$.
We define an $(\flimavg; \fsum^+)$-automaton $\nestedA_d$ over the alphabet $\Sigma \times \Gamma$ 
by modifying only the transition relations and labeling functions of the master automaton and slave automata of $\nestedA'$; 
the sets of states and accepting states are the same as in the original automata.
The transition relation and the labeling function of the master automaton $\masterA^d$ of
$\nestedA_d$ is defined as follows: for all states $q,q'$, 
$(q, \lpair{a}{h},q')$ iff $h(q) = q'$ and the master automaton of $\nestedA'$ has the transition $(q,a,q')$.
The label of the transition $(q, \lpair{a}{h},q')$ is the same as the label 
of the transition $(q,a,q')$.  
Similarly,  for each slave automaton $\slaveA_i$ in $\nestedA'$, the transition relation of the corresponding
slave automaton $\slaveA_i^d$ in $\nestedA_d$ is defined as follows:
for all states $q,q'$ of $\slaveA_i^d$, 
$(q, \lpair{a}{h},q')$ is a transition of $\slaveA_i^d$
 iff 
there $h(q) = q'$ and $\slaveA_i$ has the transition $(q,a,q')$.
The label of the transition $(q, \lpair{a}{h},q')$ is the same as the label 
of the transition $(q,a,q')$ in $\slaveA_i^d$. 
Observe that $(q, \lpair{a}{h},q')$ can be a transition of $\slaveA_i^d$ only if $h(q)$ is defined.

Observe that the master automaton of $\nestedA_d$ and all slave automata $\slaveA_i^d$
are deterministic. Moreover, since we assumed that for every slave automaton in $\nestedA'$
final states have no outgoing transitions, slave automata $\slaveA_i'$ recognize 
prefix free languages. 
Finally, it follows from the construction that
(i)~for every controllable run of $\nestedA'$, there exists a corresponding run of $\nestedA_d$ where the sequence of values returned by slave automata
is the same as in the run of $\nestedA'$; such runs have the same value.
Basically, we encode in the input word, transitions of all automata with functions $h \in \Gamma$.
Due to controllability of the run, at every position every slave automaton is in a different state.
Therefore, we can encode in $h$ transitions of all slave automata as well as the master automaton.
Conversely, (ii) a run of $\nestedA_d$, which is as a sequence of sequences of states, is basically a run of $\nestedA$ and it clearly has the same value. 
Therefore, the infimum over all words of $\nestedA_d$ coincides with the infimum over all words of $\nestedA'$ as well as of $\nestedA$.
\end{proof}


\smallskip\noindent{\bf Proof overview.} 
We present our proof overview for the emptiness of deterministic 
$(\flimavg; \fsum)$-automata. 
The proof consists of the following four key steps.
\begin{compactenum}
\item First, we identify a condition, and show in Lemma~\ref{l:Infinity} that 
it is a sufficient condition to ensure that the infimum value among all words 
is $-\infty$ (i.e., the least value possible).
Moreover we show that the condition can be decided in 
$\PTIME$ if $k$ is 
constant (even $\NLOGSPACE$ if additionally the weights are in unary)
and in $\PSPACE$ if $k$ is given in unary.

\item Second, we show that if the above condition does not hold, then there is 
a family of lasso words (i.e., a finite prefix followed by an infinite 
repetition of another finite word) that approximates 
the infimum value among all words. 
This shows that the above condition is both necessary and sufficient.
Moreover, we consider {\em dense} words, where if we consider that slave automata have been
invoked for the $i$-th time, then the run of the slave automata invoked is at most for 
$O(\log (i))$ steps. 
We show that the infimum is achieved by a dense word. 
These results are established in Lemma~\ref{l:dense}.

\item Third, we show using the above result, that the problem for bounded width
can be reduced to the problem of width~1, and the reduction is polynomial in the size
of the original automaton, and only exponential in $k$. Thus if $k$ is constant, the
reduction is polynomial.
This is established in Lemma~\ref{l:reduction}.

\item Finally, we show that for automata with width~1, the emptiness problem can be 
solved in $\NLOGSPACE$ if weights are in unary and otherwise in 
$\PTIME$ (Lemma~\ref{l:WidthOnePoly}).
\end{compactenum}
Given the above four steps we conclude our main result (Theorem~\ref{th:main}).
We start with the first item.

\smallskip\noindent{\em Intuition for the condition.} 
We first illustrate with an example that for very similar automata, which
just differ in order of invoking slave automata, the infimum over the values
are very different. 
For one automaton the infimum value is~$-\infty$ and for the other it is~0.
This example provides the intuition for the need of the condition to identify
when the infimum value is~$-\infty$.

\begin{example}
\label{ex:condMotivation}
Consider two deterministic $(\flimavg;\fsum)$-automata $\nestedA_1, \nestedA_2$ defined as follows.
The master automaton $\masterA$ of $\nestedA_1$ accepts the language $(12a^*\#)^{\omega}$.
At letter $1$ (resp., $2$) it invokes an automaton $\slaveA_1$ (resp., $\slaveA_2$).
The slave automaton $\slaveA_1$ increments its value at every $a$ letter and it terminates once it reads $\#$.
The slave automaton $\slaveA_2$ works as $\slaveA_1$ except that it decrements its value at $a$ letters.
NWA $\nestedA_2$ is similar to $\nestedA_1$ except that it accepts the language $(21a^*\#)^{\omega}$.
It invokes the same slave automata as $\nestedA_1$.
Thus the two automata only differ in the order of invocation of the slave automata.
Observe that the infimum over values of all words in $\nestedA_1$ is $0$. 
Basically, the values of slave automata are always the opposite, therefore
the average of the values of slave automata is $0$ infinitely often.
However, the infimum over values of all words in $\nestedA_2$ is $-\infty$. Indeed, consider 
a word $21a^1\# \ldots 21a^{2^i} \ldots$. At positions proceeding $1a^{2^i}$, the automaton 
$\slaveA_2$ returns the value $-2^i$ and the average of all previous $2 \cdot i$ values is $0$. 
Thus, the average at this position equals $-\frac{2^i}{2\cdot i}$ (recall that the average
is over the number of invocations of slave automata). 
Hence, the limit infimum of averages is $-\infty$.
\end{example}

\Paragraph{Condition for infinite infimum.}
Let $k > 0$ and $\nestedA$ be a deterministic $(\flimavg;\fsum)$-automaton of width $k$.
Let $C$ be the minimal weight of slave automata of $\nestedA$.
Condition~(*):
\begin{description}
\item[(*)] $C<0$ and there exists a word $w$ accepted by $\nestedA$ and infinitely many positions $b$ such that 
the sum of weights, which automata active at position $b$ accumulate while running on $w[b, \infty]$, is smaller than $C \cdot k^2 \cdot \conf(\nestedA)$.
\end{description}
Intuitively, condition (*) implies that there is a subword $u$ which can be repeated so that the values of slave automata 
invoked before position $b$ can be decreased arbitrarily. Note that pumping that word may not decrease the total
average of the word. However, with $\flimavg$ value function, we need to ensure only the existence of a subsequence of positions 
at which the averages tend to $-\infty$, i.e., we only need to decrease the 
values of slave automata invoked before position $b$ (for infinitely many 
positions).

\smallskip\noindent{\em Illustration of condition on example.}
Consider automata $\nestedA_1, \nestedA_2$ from Example~\ref{ex:condMotivation}.
The automaton $\nestedA_2$ satisfies condition (*), whereas $\nestedA_1$ does not. 
In the word $21a^1\# \ldots 21a^{2^i} \ldots$, consider positions $b$, where $\slaveA_2$ is invoked by $\nestedA_2$. 
The automaton $\slaveA_2$ works on the subword $21 a^{2^i}$, where both automata $\slaveA_1, \slaveA_2$ are active and the sum of their values
past any position is $0$. However, the only slave automaton active at position $b$ is $\slaveA_2$. 
These automaton accumulates the value $-2^i$ past position $b$. Therefore, past some position $N$, 
all such positions $b$ satisfy the statement from condition (*), and hence $\nestedA_2$ satisfies condition (*).
Now, for  $\nestedA_1$, at every position at which $\slaveA_2$ is active, $\slaveA_1$ is active as well, hence
for any position $b$, the values accumulated by slave automaton active past this position is non-negative.
Hence, $\nestedA_1$ does not satisfy condition (*).
We now present our lemma about the condition.

\begin{restatable}{lemmaStatement}{Condition}
Let $k > 0$ and $\nestedA$ be a deterministic $(\flimavg;\fsum)$-automaton of width $k$. 
\begin{compactenum}[(1)]
\item If condition (*) holds for $\nestedA$,  then $\inf_{\iterWord \in \Sigma^{\omega}} \nestedA(\iterWord) = -\infty$.
\item Condition (*) can be checked in $\NLOGSPACE$ if the width is constant and weights are given in unary,
$\PTIME$ if the width is constant,
 and in $\PSPACE$ if the width is given in unary. 
\end{compactenum}
\label{l:Infinity}
\end{restatable}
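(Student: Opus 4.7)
The plan is to attack the two parts separately, with part~(1) being the conceptually harder half via a pumping argument and part~(2) via a standard B\"uchi-automaton product construction. For part~(1), suppose condition~(*) holds with witness $w$ and an infinite sequence of witness positions $b_1 < b_2 < \ldots$. At each such $b$ the $\leq k$ active slaves jointly accumulate strictly less than $C \cdot k^2 \cdot \conf(\nestedA)$ after $b$. Averaging this deficit over the at most $k$ slaves shows that some single slave $\slaveA^\star$ accumulates strictly less than $C \cdot k \cdot \conf(\nestedA)$ past $b$; and since every step contributes weight at least $C$, the remaining run of $\slaveA^\star$ has length strictly greater than $k \cdot \conf(\nestedA)$. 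By pigeonhole, the full NWA configuration $(q; q_1, \ldots, q_k)$ thus repeats at least $k+1$ times within this suffix.

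From these repetitions I would extract a pair of positions $p_1 < p_2$ with matching configurations and strictly negative accumulated slave-weight between them: the $k$ middle segments share total contribution at least $|C| \cdot \conf(\nestedA)$ negative (that being roughly the deficit that cannot be absorbed into the bounded ``side'' contributions), so at least one middle segment is a negative cycle. Pumping the block $w[p_1+1, p_2]$ preserves the NWA's global behavior at the pump boundaries because configurations agree and all slave automata are deterministic, while each additional copy of the block lowers the cumulative slave-weight by at least $|C| \cdot \conf(\nestedA)$. Iterating this scheme across the infinitely many witness positions (and pumping progressively more at each) yields a single word whose sequence of partial averages contains a subsequence tending to $-\infty$, hence $\flimavg = -\infty$.

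For part~(2), I would express (*) as non-emptiness of a B\"uchi automaton $\calU$ that simulates $\nestedA$ at the level of configurations $(q; q_1, \ldots, q_k)$, so $\calU$ has $|\nestedA|^{O(k)}$ states. Additionally, $\calU$ non-deterministically marks a position $b$ as a witness, records the active slaves, and thereafter maintains a single counter whose value is the truncated running sum of the weights those flagged slaves produce after $b$; the counter's range is capped at the target threshold $C \cdot k^2 \cdot \conf(\nestedA)$, which is polynomial in the input for unary weights and pseudo-polynomial otherwise. Once all flagged slaves terminate with the counter at its negative floor, $\calU$ takes an accepting edge and may re-enter guessing mode, so B\"uchi acceptance of $\calU$ coincides exactly with the existence of infinitely many witness positions. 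Standard emptiness analysis for B\"uchi automata then gives $\NLOGSPACE$-membership for constant $k$ and unary weights, $\PTIME$ for constant $k$ with general weights, and $\PSPACE$ when the width $k$ is part of the input in unary (since $\calU$ is exponential in $k$ but on-the-fly searchable).

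The main obstacle will be the pumping step in part~(1): one must carefully verify that the extracted negative cycle really admits unbounded amplification, i.e., that repeating $w[p_1+1, p_2]$ does not accidentally reintroduce the lost weight through the slaves active at the boundary or through slaves invoked during the cycle that survive past~$p_2$. Handling these boundary slaves (e.g., by choosing $p_1, p_2$ so that the cycle is ``self-contained'' or by tracking surviving-slave states and exploiting determinism) and then lifting the single-position deficit to a true $\liminf$-type divergence of partial averages is where the bulk of the technical work lives.
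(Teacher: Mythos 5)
Your part (1) follows essentially the same route as the paper: from a witness position $b$ you use the bound on the per-step weight decrease to force a long stretch of the run during which the slaves active at $b$ keep losing weight, pigeonhole on configurations to extract a repeatable negative segment, and pump it with rapidly growing multiplicities so that a subsequence of partial averages diverges to $-\infty$. The paper resolves exactly the boundary issue you flag by choosing the window $[\pos{x}_1,\pos{x}_2]$ so that no slave invoked before $b$ terminates inside it, and then pumping $w[\pos{c}_1,\pos{c}_2]$ with exponents $f(i)=2^{2^i}$; the point that makes the averages diverge is that each slave's entire accumulated value is charged to its single invocation index in the averaged sequence, so a block pumped $f(n+1)$ times contributes roughly $-f(n+1)$ against only about $2f(n)$ invocations. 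Your variant of locating the cycle via a single slave with a large deficit is workable, though the quantity that must decrease along the extracted cycle is the joint sum over the surviving slaves, not the weight of one slave.

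Part (2) has a genuine gap. Your automaton $\calU$ stores an explicit truncated counter ranging over $[C\cdot k^2\cdot\conf(\nestedA),\,0]$. With $k$ constant but weights in binary, $|C|$ is exponential in the input size, so $\calU$ has exponentially many states, and ``standard \buchi{} emptiness analysis'' on it gives only $\PSPACE$, not the claimed $\PTIME$. The paper avoids the counter entirely: it shows that (*) is equivalent to the existence, in the configuration graph $G(\nestedA)$, of a cycle that is reachable from and co-reachable with an accepting configuration and along which, for some $j\geq 1$, the sum of weights of the $j$ least recently invoked slaves is negative. Detecting a reachable negative cycle in a polynomial-size weighted graph is polynomial independently of the weight encoding (Bellman--Ford style), which is what yields $\PTIME$ for constant width; the unary-weight $\NLOGSPACE$ bound and the $\PSPACE$ bound for width given in unary then follow much as in your argument. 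To repair your proof you would need to replace the explicit counter by such a structural negative-cycle characterization, and also make the equivalence with (*) explicit in both directions, including that acceptance of the master and of all slaves is preserved.

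(Note: the macro \pos above is not defined in the preamble; read $\pos{x}_1,\pos{x}_2,\pos{c}_1,\pos{c}_2$ as the positions $x_1,x_2,c_1,c_2$ from the paper's proof.)
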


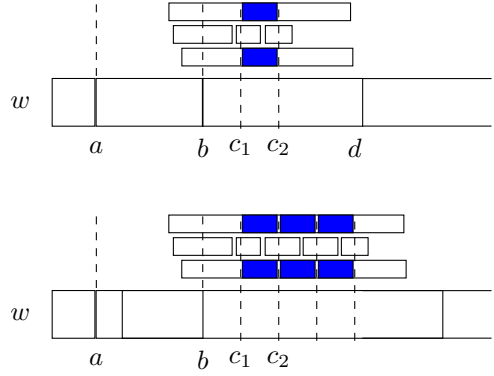
\begin{figure}
\centering
\begin{tikzpicture}
\pumpingOne 
\end{tikzpicture}
\caption{Explanation to the proof in Lemma~\ref{l:Infinity}}
\label{fig:pumpingOne}
\end{figure}

\begin{proof} We present proofs for each item below.

\Paragraph{Proof of (1)}.
\newcommand{\pos}[1]{\textbf{#1}}
Assume that (*) holds. 
We show that there exists a word $u'$ such that $\nestedA(u') = -\infty$. 
Consider a word $w$ and position $\pos{b}$ which 
(1)~satisfy  condition (*), and 
(2)~the configuration at the position $\pos{a}$, the position of invocation of least recent slave automaton active at $\pos{b}$, 
occurs infinitely often in the run of $\nestedA$ in $w$.

At every step all slave automata decrease the sum of weights by at most $C \cdot k$. 
Therefore, there exist more than $k \cdot \conf(\nestedA)$ positions in $w[\pos{b},\infty]$ at which the 
the sum of weights of automata invoked before position $\pos{b}$ decreases.
Thus there exist positions $\pos{x}_1, \pos{x}_2$
between which no automaton invoked before position $\pos{b}$ terminates and there are more than 
$\conf(\nestedA)$ positions at which the sum of values of these automata decreases.  

Therefore, there exist positions $\pos{c}_1, \pos{c}_2$ such that the configurations at $\pos{c}_1, \pos{c}_2$ are the same, and 
the sum of values of all automata activated before $\pos{b}$, which are still active at $\pos{c}_2$ decreases between $\pos{c}_1$ and $\pos{c}_2$ (see Fig.~\ref{fig:pumpingOne}).
Let $\pos{d}$ be a position past $\pos{c}_2$ with the same configuration as at $\pos{a}$
 such that the master automaton visits an accepting state between $\pos{a}$ and $\pos{d}$.
Recall that all slave automata active at $\pos{b}$ have been invoked past $\pos{a}$.
We show how to build the word $u'$ form subwords $w[1, \pos{a}], w[\pos{a}, \pos{c}_1], w[\pos{c}_1, \pos{c}_2]$ and $w[\pos{c}_2, \pos{d}]$.

Let $f(i) = 2^{2^i}$, i.e., $f(n+1) = f(n)\cdot f(n)$.
Consider the word $u' = w[1,\pos{a}] \gamma_1 \gamma_2 \ldots$, where
$\gamma_i = w[\pos{a},\pos{c}_1] (w[\pos{c}_1,\pos{c}_2])^{f(i)} w[\pos{c}_2, \pos{d}]$. 
Observe that the sum of values of slave automata invoked in the prefix $w[1,\pos{a}] \gamma_1 \ldots \gamma_n w[\pos{a},\pos{c}_1]$, which is shorter
than $2 \cdot f(n)$ is less than $k \cdot C \cdot 2\cdot f(n) - f(n+1) < - f(n) (f(n) - 2\cdot k \cdot C)$. 
At most $2\cdot f(n)$ slave automata have been invoked in this prefix, hence the average is less than $0.5 \cdot (-f(n) + k \cdot C)$. 
Hence, the limit infimum tends to minus infinity. Moreover, the master automaton visits one of its accepting states at least once in each $\gamma_i$.

\Paragraph{Proof of (2)}.
\newcommand{\config}{\textsc{Cnf}}
Consider a graph $G(\nestedA)$ of configurations of $\nestedA$, in which there exists an edge from configuration $\config_1$ to $\config_2$ if and only if 
the automaton $\nestedA$ has a transition from $\config_1$ to $\config_2$. 
Recall that in a configuration $(q;q_1, \ldots q_j, q_{j+1}, \ldots, q_{k})$ the states $q_1, \ldots, q_j$
correspond to the $j$ least recently invoked slave automata.
We show that condition (*) holds if and only if (**)~there exists a cycle in $G(\nestedA)$ such that 
\begin{compactenum}[(a)]
\item a configuration $\config$ with an accepting state of the master automaton of $\nestedA$ is reachable from the cycle and the cycle is reachable from $\config$, and
\item for some $j\geq 1$, the sum of weights of the $j$ least recently invoked slave automata is negative in this cycle.
\end{compactenum}

\noindent $\bm{(*) \Rightarrow (**)}$: Consider word $w[1,\pos{a}] w[\pos{a},\pos{c}_1] w[\pos{c}_1, \pos{c}_2]$. 
The configurations (in the run of $\nestedA$) along the subword $w[\pos{c}_1,\pos{c}_2]$ define a cycle satisfying  (a) and (b).
Indeed, the master automaton of $\nestedA$ visits its accepting state on $w[1,\pos{a}] w[\pos{a},\pos{c}_1] w[\pos{c}_1,\pos{c}_2] w[\pos{c}_2, \pos{d}]$; hence (a) holds.
Let $j$ be the number of slave automata invoked at position $\pos{a}$.  Then, the sum of weights of the $j$ least recently invoked slave automata on $w[\pos{c}_1, \pos{c}_2]$
is negative; hence (b) holds.

\noindent $\bm{(**)\Rightarrow (*)}$: Let $D$ be the maximal absolute weight in nested automata of $\nestedA$.
Consider a word which corresponds to a path in $G(\nestedA)$ from the initial configuration to the cycle satisfying  $(**)$ and repeating infinitely the following
extended cycle: looping at the cycle from (**) $2 \cdot D \cdot k^2 \cdot \conf(\nestedA)$ times, and finally going through all configurations to terminate all slave automata,
visit an accepting state of $\nestedA$ and returning to the start of the extended cycle.
Observe that terminating all slave automata can be done in at most $k \cdot \conf(\nestedA)$ steps, therefore all slave automata active at the beginning 
of the extended cycle accumulate in total the weight smaller than $-2 \cdot D \cdot k^2 \cdot \conf(\nestedA)$ (iterating the negative cycle) 
plus $D \cdot k \cdot \conf(\nestedA)$ (terminating slave automata), which is smaller than $C \cdot k^2 \cdot \conf(\nestedA)$; hence (*) holds.

Finally, checking existence of such a cycle can be done in $\NLOGSPACE$ w.r.t.\ to the size of $G(\nestedA)$. If the width $k$ is constant, then 
the size $G(\nestedA)$ is $O(|\nestedA|^k)$, hence it is polynomial (in $\nestedA$) if $k$ is constant and exponential if $k$ is given in unary.
Thus, checking condition (*) is $\NLOGSPACE$ for constant width and $\PSPACE$ if the width is given in unary.
\end{proof}


\begin{definition}
Let $\nestedA$ be a deterministic $(\flimavg;\fsum)$-automaton of width $k$.
A word $w$ is \emph{dense} (w.r.t.\ $\nestedA$) if in the run of $\nestedA$ on $w$, 
for every $i>0$, the $i$-th invoked slave automaton takes at most $O(\log(i))$ steps.
\end{definition}

\noindent{\em Intuitive explanation of dense words.} 
In a deterministic $(\flimavg;\fsum)$-automaton, the average is over the number of invoked slave automata, but 
in general, the returned values of the slave automata can be arbitrarily large
as compared to the number of invocations, and hence the partial averages need 
not converge.
Intuitively, in dense words, slave automata are invoked and terminated 
relatively densely, i.e., the length of their run depends on the number of 
slave automata invoked till this position.
In consequence, the value they can accumulate is small w.r.t.\ the average, 
i.e.,
their absolute contribution to the sum of first $n$ elements is $O(\log(n))$, and 
hence the contribution of the value a single slave automaton converges to $0$ 
and partial averages converge on dense words. 

\smallskip\noindent{\em Illustration on example.}
Consider an automaton $\nestedA_1$ from Example~\ref{ex:condMotivation}. We discuss the definition of density on an example of 
word $w = 12a^{1}\#12a^{3}\# \ldots 12a^{2\cdot i +1} \#\ldots $, which is not dense (w.r.t.\ $\nestedA_1$). 
Observe that at the position of subword $12a^{2\cdot i +1}$, the partial average is~$0$.
Once $\slaveA_1$ is invoked it returns value ${2\cdot i +1}$ and it is $({2\cdot i +1})$-th 
invocation of a slave automaton. 
Hence, the average increases to $1$ only to be decreased to $0$ after invocation of $\slaveA_2$. 
However, the word $w' = 12a^{1} \#(12a^{2}\#)^3 \ldots (12a^{2\cdot i +1} \#)^{2^i} \ldots$ is dense.
Indeed, before the slave automata invoked at subword $12a^{2 \cdot i+1} \#$ there are at least $\sum_{j=1}^{i-1} 2^j = 2^{i}-1$ 
invoked slave automata.
Therefore, the value ${2\cdot i +1}$ returned by $\slaveA_1$ invoked on $12a^{2\cdot i +1}$
changes the average by at most $\frac{2\cdot i +1}{2^i}$; as previously invoking $\slaveA_2$ in the next 
step bring the average back to $0$.
Therefore, the sequence of partial averages of values returned by slave automata converges to $0$.
\begin{figure}
\centering
\begin{tikzpicture}
\pumpingTwo
\end{tikzpicture}
\caption{Explanation to Lemma~\ref{l:dense}; the blue part corresponds to $H$, while the green part corresponds to $T$}
\label{fig:clarify}
\end{figure}
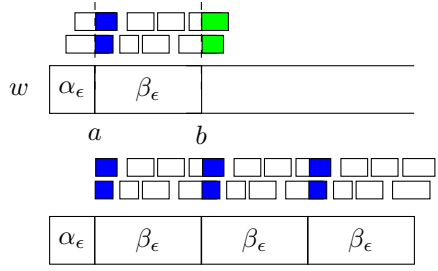

\begin{restatable}{lemmaStatement}{Density}
Let $k > 0$ and $\nestedA$ be a deterministic $(\flimavg;\fsum)$-automaton of width $k$.
Assume that condition (*) does not hold. Then the following assertions hold:
\begin{compactenum}[(1)]
\item For every $\epsilon >0$ there exist finite words $\alpha_{\epsilon}, \beta_{\epsilon}$
such that $|\inf_{\iterWord \in \Sigma^\omega} \nestedA(\iterWord)- \nestedA(\alpha_{\epsilon} (\beta_{\epsilon})^{\omega}) | < \epsilon$.
\item The value $\inf_{\iterWord \in \Sigma^{\omega}} \nestedA(\iterWord)$ is greater than $-\infty$. 
\item There exists a dense word $\optWord_d$ such that $\inf_{\iterWord \in \Sigma^\omega} \nestedA(\iterWord) = 
\nestedA(\optWord_d)$.
\end{compactenum}
\label{l:dense}
\end{restatable}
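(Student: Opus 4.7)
I would establish the items in the order $(2), (1), (3)$. For $(2)$, I would use the equivalence $(*) \Leftrightarrow (**)$ extracted from the proof of Lemma~\ref{l:Infinity}(2): under the negation of $(*)$, no reachable accepting cycle in the configuration graph $G(\nestedA)$ admits a negative partial sum of weights over persistent slave automata (ordered by recency), which rules out the pumping construction of Lemma~\ref{l:Infinity}(1) that drives averages to $-\infty$. Quantitatively, under $\neg(*)$ the total future weight $A_b$ of the slaves active at position $b$ in any accepted run is, for all but finitely many $b$, bounded below by a constant depending only on $\nestedA$. Combined with the bookkeeping identity $\sum_{i=1}^n v(\pi_i) = \sum_{j<s_{n+1}} W(j) + A_{s_{n+1}}$, where $W(j)$ is the total weight accumulated at step $j$ by the currently active slaves and $s_i$ is the invocation position of $\pi_i$, together with the width-$k$ constraint and a cycle-decomposition argument to control $\sum_j W(j)$, this yields a uniform lower bound on the partial averages $\frac{1}{n}\sum_{i\leq n} v(\pi_i)$, and hence $\inf_{\iterWord} \nestedA(\iterWord) > -\infty$.

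For $(1)$, having set $\lambda^* := \inf_{\iterWord\in\Sigma^\omega} \nestedA(\iterWord)$, which is finite by $(2)$, I would fix $\epsilon>0$ and pick an accepted $w$ with $\nestedA(w) \le \lambda^* + \epsilon/3$. Since $\nestedA(w)$ is a $\liminf$ of partial averages, infinitely many positions $n$ satisfy $\tfrac1n \sum_{i\leq n} v(\pi_i) \le \lambda^* + \epsilon/2$; by pigeonhole on the finite configuration space $\conf(\nestedA)$, two such positions $n_p < n_q$ must share a configuration with an accepting master state visited in between, and can be chosen far enough apart that the per-cycle average over $w[n_p+1, n_q]$ lies within $\epsilon$ of $\lambda^*$. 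Setting $\alpha_\epsilon = w[1,n_p]$ and $\beta_\epsilon = w[n_p+1, n_q]$, the limit average of $\alpha_\epsilon \beta_\epsilon^\omega$ equals this cyclic average up to a vanishing correction from the at most $k$ slaves straddling cycle boundaries, whose total future weight is bounded by $\neg(*)$. This gives $|\nestedA(\alpha_\epsilon \beta_\epsilon^\omega) - \lambda^*| < \epsilon$.

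For $(3)$, I first observe that any lasso $\alpha\beta^\omega$ is automatically dense: after the finite transient $\alpha$, slave automata in the periodic part have uniformly bounded run length $B$, so the $i$-th slave's length is $O(1) = O(\log(i))$. By $(1)$, I would pick lassos $L_n = \alpha_n \beta_n^\omega$ with $\nestedA(L_n) \to \lambda^*$ and, by a further pigeonhole on configurations, restrict to a subsequence where the cycle configuration at the start of each $\beta_n$ is a common $c$. I then construct
\[
\optWord_d \;=\; \alpha_1\,\beta_1^{N_1}\,\beta_2^{N_2}\,\beta_3^{N_3}\,\cdots
\]
with $N_n$ chosen large enough that (i)~the partial average at the end of block $n$ lies within $1/n$ of $\nestedA(L_n)$, and (ii)~the $i$-th slave invoked in $\optWord_d$ has length at most $c'\log(i)$, guaranteed e.g.\ by $N_n \ge 2^{B_{n+1}}$, where $B_m$ is the slave-length bound inside $\beta_m$. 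Determinism of $\nestedA$ together with the shared configuration $c$ at every block boundary makes the concatenation an accepting run of $\nestedA$, and combining $\nestedA(\optWord_d) \le \lambda^*$ from (i) with $\nestedA(\optWord_d) \ge \lambda^*$ from the infimum gives $\nestedA(\optWord_d) = \lambda^*$.

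The hardest part throughout is the careful accounting of slave automata that straddle cycle or block boundaries. In $(1)$ and $(2)$ the bound on their total future weight supplied by $\neg(*)$ is exactly what makes their contribution to the limit average vanish; in $(3)$ the shared-configuration pigeonhole and the growth rate of $N_n$ must jointly deliver both convergence to $\lambda^*$ and maintenance of the density bound despite straddlers at each block boundary.
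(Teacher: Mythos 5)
Your treatment of items (1) and (3) is essentially the paper's own argument: for (1), cut a near-optimal word at two positions sharing a configuration with an accepting master state in between, and charge the error to the at most $k$ slave automata straddling the cut, whose contribution is controlled by $\neg(*)$ (the paper's ``tail'' $T$) and by the weight they re-accumulate on the repeated period (the paper's ``head'' $H$ -- make sure you bound this second quantity too, which is why the paper additionally requires that all slaves invoked before the first cut point terminate before the second, and that the straddlers' future weight is small relative to the number $n$ of invocations per period); for (3), the same blocks $\beta_n$ anchored at a common configuration and repeated $N_n \geq 2^{|\beta_n|+|\beta_{n+1}|}$ times is exactly the paper's construction $\alpha_1\beta_1^{k[1]}\beta_2^{k[2]}\cdots$. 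Where you genuinely diverge is item (2) and the order of proof: the paper proves (1) first and then gets (2) in two lines, by observing that any lasso $\alpha_\epsilon(\beta_\epsilon)^\omega$ has finite value (every slave invoked in the periodic part must terminate within $|\beta_\epsilon|$ steps), so a lasso within distance $1$ of the infimum forces the infimum to be finite. Your order $(2)\to(1)$ is arguably cleaner logically, since the paper's proof of (1) starts from a word within $\epsilon/4$ of the infimum and thus tacitly presupposes what (2) asserts; but it buys this at the cost of a direct combinatorial proof of (2) that the paper never has to carry out.

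That direct proof is also where the one real gap sits. You claim that $\neg(*)$, via the configuration-graph characterization $(**)$, together with ``a cycle-decomposition argument to control $\sum_j W(j)$'', yields a uniform lower bound on the partial averages $\frac{1}{n}\sum_{i\leq n} v(\pi_i)$. As stated this does not typecheck: the quantity $\sum_{j<s_{n+1}} W(j)$ cannot be bounded below by a constant (already for an automaton whose every slave returns $-1$ it equals $-n$), and a constant bound would prove the false statement that the infimum is always at least $0$; what you need is a bound of the form $-M\cdot n$. To get it you must use $\neg(**)$ in the specific form ``for \emph{every} $j$ the sum of weights of the $j$ least recently invoked slaves over a reachable, co-reachable cycle is non-negative'', apply it with $j$ equal to the full set of active slaves on cycles during which no slave is invoked or terminated, and then argue that the segments of the run containing an invocation or termination number only $O(n)$, each contributing at least $C\cdot k\cdot\conf(\nestedA)$. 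Your sketch names none of this, and without it the step from ``no negative persistent-slave cycle'' to ``partial averages bounded below'' is unsupported. If you prefer not to fill this in, the cheaper fix is to adopt the paper's route: prove (1) for an arbitrary accepted word (approximating its $\liminf$ rather than the global infimum) and then read off (2) from the finiteness of lasso values.
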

\begin{proof} We present the proof of each item.

\Paragraph{Proof of (1)}:
\newcommand{\optWordEps}{\optWord_{\epsilon}}
Consider $\epsilon >0$.
Let $\optWord_{\epsilon}$ be a word such that $\nestedA(\optWordEps) - \inf_{\iterWord \in \Sigma^{\omega}} \nestedA(\iterWord) < \frac{\epsilon}{4}$.
There exists $N_0$ such that past $N_0$, all configurations of $\nestedA$ occur infinitely often in the run of $\nestedA$ on $\optWordEps$ and
no position $b > N_0$ in $\optWord$ satisfies condition (*).
Let $N_0 < a < b$ be positions in $\optWordEps$ such that 
\begin{compactenum}[(1)]
\item the configuration of $\nestedA$ at $a$ and $b$ is the same,
\item the average of the values of slave automata invoked between $a$ and $b$ is at most $\nestedA(\optWordEps) + \frac{\epsilon}{4}$,
\item the number $n$, which is the number of slave automata invoked between positions $a$ and $b$, is greater that 
$\frac{4}{\epsilon} \cdot C \cdot k^2 \cdot \conf(A)$, where $C$ is the minimal weight, and
\item the sum of the weights slave automata active at $a$ accumulate past $a$ is at most $\frac{n \cdot \epsilon}{4}$,
\item all slave automata invoked before $a$ are terminated before $b$,

\item the master automaton of $\nestedA$ visits between $a$ and $b$ an accepting state at least once.
\end{compactenum}
Let $P > N_0$ be the minimal position such that all configurations, which appear infinitely often the run of $\nestedA$ on $\optWordEps$, 
appear between $N_0$ and $P$. Let $M$ be the maximal absolute sum of slave automata invoked before $P$. 
We pick $b$ such that the average of all slave automata invoked before $b$ is at most  $\nestedA(\optWordEps) + \frac{\epsilon}{8}$,
and $b$ is large enough to satisfy (3), (6) and (4) and (5) for $a = P$. For such $b$ we choose $a$ from the interval $[N_0, P]$ and observe that (1)--(6) are satisfied.
 
We put $\alpha_{\epsilon} = u[1,a]$ and 
$\beta_{\epsilon} = u[a,b]$. We claim that  $|\nestedA(\optWordEps) - \nestedA(\alpha_{\epsilon} (\beta_{\epsilon})^{\omega}) | < \epsilon$.
Clearly, $\nestedA(\optWordEps) < \nestedA(\alpha_{\epsilon} (\beta_{\epsilon})^{\omega})$. 
Conversely, due to condition (3), we have 
$\nestedA(\alpha_{\epsilon} (\beta_{\epsilon})^{\omega}) < \nestedA(\optWordEps) + \frac{\epsilon}{4} - T + H$, where
$T$, the tail, is the sum of weights, which automata invoked before position $b$ accumulate past position $b$, and 
$H$, the head, is the sum of weights, which automata invoked before position $b$ accumulate while running on $\beta_{\epsilon}$ again (see Fig~\ref{fig:clarify}).
By the condition (5), we can estimate $H$ by $\frac{n \cdot \epsilon}{4}$. 
Due to (4) and by condition (*),  we have $T < C \cdot k^2 \cdot \conf(\nestedA)$. 

\Paragraph{Proof of (2)}:
Assume that condition (*) does not hold. Then, for $\epsilon = 1$ there exist $\alpha_{\epsilon}, \beta_{\epsilon}$ such that 
$|\nestedA(\alpha_{\epsilon} (\beta_{\epsilon})^{\omega}) -\inf_{\iterWord \in \Sigma^{\omega}} \nestedA(\iterWord) | < 1$. However, observe that 
$\nestedA(\alpha_{\epsilon} (\beta_{\epsilon})^{\omega})$ is finite. Indeed, all slave automata invoked in $\beta_{\epsilon}$ has to terminate within
$|\beta_{\epsilon}|$ steps, otherwise there is a slave automaton with an infinite run. It follows that the value of 
$\nestedA(\alpha_{\epsilon} (\beta_{\epsilon})^{\omega})$ is greater of equal to $C \times k \times |\beta_{\epsilon}|$, where $C$ is the minimal weight in 
all slave automata of $\nestedA$. 
In consequence, $\inf_{\iterWord \in \Sigma^{\omega}} \nestedA(\iterWord) > -\infty$.

\Paragraph{Proof of (3)}:
We can strengthen~(1) and say that
there exist pairs of words $(v_1, u_1), (v_2, u_2), \ldots$ such that
(1)~for every $i$, we have $|\nestedA(v_i u_i^{\omega}) - \nestedA(\optWordEps)| < \frac{1}{i}$, and
(2)~for all $i,j$, the automaton $\nestedA$ reaches the same configuration from the initial state upon
reading $v_i$ and $v_j$.
Having conditions (1) and (2) we define word $\optWord$  as $\alpha_1 \beta_1^{k[1]} \beta_2^{k[2]} \ldots$,
where the sequence $k[i]$ is defined as follows:
$k[1] = 1$ and for $i\geq 1$ we have $k[i] = 2^{|\beta_{i-1}| + |\beta_i| + |\beta_{i+1}|}$.
Clearly, $\inf_{\iterWord \in \Sigma^{\omega}} \nestedA(\iterWord) = \nestedA(\optWord')$. 
Observe that every slave automaton invoked at a subword $\beta_i$ terminates within $max(|\beta_i|, |\beta_{i+1}|)$ steps, depending on whether the following subword in 
$\beta_i$ or $\beta_{i+1}$.
However, the first subword $\beta_i$ appears first after at least $2^{|\beta_i|+|\beta_{i+1}|}$ slave automata has been invoked.
Therefore, the word $\optWord'$ is dense.
\end{proof}

\begin{remark}
Lemma~\ref{l:Infinity} together with (2) of Lemma~\ref{l:dense} imply that 
for a deterministic $(\flimavg;\fsum)$-automaton $\nestedA$ of width $k$ condition 
(*) is both necessary and sufficient for
the infimum over all values equal to $-\infty$. 
Moreover, this condition can be checked efficiently. 
\end{remark}

Lemma~\ref{l:reduction} reduces the emptiness problem for deterministic $(\flimavg;\fsum)$-automata of width $k$ 
to the same problem with automata of width $1$. 

\begin{restatable}{lemmaStatement}{Reduction}
\label{l:reduction}
Let $k > 0$ and $\nestedA$ be a deterministic $(\flimavg;\fsum)$-automaton of width $k$.
Assume that condition (*) does not hold. Then, there exists a deterministic $(\flimavg;\fsum)$-automaton $\nestedA_1$ 
of width $1$ over an alphabet $\Delta$ such that
$\inf_{\iterWord \in \Sigma^{\omega}} \nestedA(\iterWord) = \inf_{\iterWord \in \Delta^{\omega}} \nestedA_1(\iterWord)$.
The size of $\nestedA_1$ is $O(|\nestedA|^k)$ and it can be constructed on-the-fly.
\end{restatable}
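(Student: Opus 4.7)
The plan is to take the new master of $\nestedA_1$ to be a product automaton whose states are configurations $(q;q_1,\ldots,q_k)$ of $\nestedA$; this immediately gives the state bound $O(|\nestedA|^k)$ and a deterministic master that simulates, in lockstep, the master of $\nestedA$ together with all active slaves of $\nestedA$. The alphabet will be extended to $\Delta = \Sigma \times \Gamma$, where $\Gamma$ carries scheduling tokens that let the new master decide, step by step, how and when to emit the value of each terminating slave of $\nestedA$ through a single width-$1$ slave of $\nestedA_1$.

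On reading the $\Sigma$-component of a letter, the new master advances its configuration internally. Whenever the simulation reports that a slave $\slaveA_j$ of $\nestedA$ has just reached an accepting state, the tracked sum of weights $v_j$ it accumulated must be emitted as the value of some slave of $\nestedA_1$. To do this with width $1$, the $\Gamma$-component of the current (and possibly subsequent) letters triggers a dedicated width-$1$ slave of $\nestedA_1$ that is essentially a replay of $\slaveA_j$ driven through a short $\Gamma$-encoded window of the input and returns exactly $v_j$. When several slaves of $\nestedA$ terminate in the same step, $\Gamma$ imposes an arbitrary but fixed order and serializes their emissions, so that at most one width-$1$ slave of $\nestedA_1$ is active at any position.

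The main obstacle will be ensuring two invariants simultaneously: (a) every accepting run of $\nestedA_1$ has at most one active slave at every position, and (b) the sequence of slave-values produced by $\nestedA_1$, after removal of $\bot$ symbols, coincides with the slave-value sequence of the simulated run of $\nestedA$. Invariant (a) is enforced by the serialization built into $\Gamma$; invariant (b) is built in by construction, since each slave invocation of $\nestedA$ maps to exactly one emission slave of $\nestedA_1$. Because $\silent{\flimavg}$ depends only on the sequence of non-$\bot$ emitted values and not on the positions at which they are emitted, these invariants together imply $\nestedA_1(u') = \nestedA(u)$ whenever $u'$ is a correct $\Delta$-encoding of $u$.

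To conclude, I will establish both inequalities. Given any dense optimal word $u$ for $\nestedA$ supplied by Lemma~\ref{l:dense}(3), I lift it to a $\Delta$-word $u'$ by annotating each position with the scheduling tokens dictated by the run of $\nestedA$ on $u$; by the previous paragraph, $\nestedA_1(u') = \nestedA(u)$, yielding $\inf_{u' \in \Delta^\omega} \nestedA_1(u') \leq \inf_{u \in \Sigma^\omega} \nestedA(u)$. Conversely, any $\Delta$-word $u'$ accepted by $\nestedA_1$ projects on $\Sigma$ to a word $u$ accepted by $\nestedA$ whose slave-value sequence coincides with that of $u'$, so $\nestedA(u) \leq \nestedA_1(u')$. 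Combined with Lemma~\ref{l:dense}(2), which rules out the degenerate case of infimum $-\infty$, these two inclusions establish the claimed equality. The construction is on-the-fly since product configurations and scheduling tokens can both be generated one step at a time from local information.
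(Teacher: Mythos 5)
Your master construction (tracking configurations $(q;q_1,\ldots,q_k)$) matches the paper's, but your slave construction has two genuine gaps. First, a width-$1$ slave of $\nestedA_1$ cannot ``return exactly $v_j$'' through a short $\Gamma$-encoded window: $v_j$ is the $\fsum$-value of a slave run of unbounded length, hence unbounded, while a finite-state weighted automaton that runs for a bounded number of steps can return only boundedly many distinct values. If instead you make the emission window long enough to accumulate $v_j$ letter by letter, the serialized windows for overlapping or simultaneous terminations force you to pad the $\Delta$-word, and your converse direction (``$u'$ projects on $\Sigma$ to a word $u$ with the same slave-value sequence'') no longer holds.

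Second, and more fundamentally, the claim that $\silent{\flimavg}$ ``depends only on the sequence of non-$\bot$ emitted values and not on the positions at which they are emitted'' does not give you $\nestedA_1(u')=\nestedA(u)$. Your construction emits values in \emph{termination} order, whereas the value of $\nestedA$ is the limit inferior of partial averages of the values taken in \emph{invocation} order. These two orders can differ by an unbounded permutation (one slave may stay active while arbitrarily many later-invoked slaves are invoked and terminated), and $\flimavg$ is not invariant under such reorderings. This is exactly the obstacle the paper's proof is built around: its single slave $\slaveA^{\Sigma}$ returns the \emph{windowed sum} of the weights of all currently active slaves between two consecutive invocations --- a re-bracketing of the total weight, not the individual values $v_j$ --- and the two bracketings are then shown to have the same limit average only on \emph{dense} words, where each slave invoked as the $i$-th one runs for $O(\log i)$ steps, so the boundary terms contribute $O(\log n/n)$ to the $n$-th partial average and vanish. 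Lemma~\ref{l:dense}(3) supplies a dense optimal word for one inequality, and a separate argument is needed (and given in the paper) that an optimal word for $\nestedA_1$ can be chosen dense with respect to $\nestedA$ for the other. Without the windowed-sum slave and the density-based convergence argument, neither direction of your claimed equality goes through.
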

\begin{proof}
We define the automaton $\nestedA_1$ of width bounded by $1$, whose slave automaton $\slaveA^{\Sigma}$ simulates runs of $k$ slave automata of $\nestedA$.
The slave automaton  $\slaveA^{\Sigma}$ computes the sum of weights collected by all active slave automata of $\nestedA$. Once $\nestedA$ invokes a new slave automaton,
$\slaveA^{\Sigma}$ terminates and it is immediately restarted in the next transition.
Moreover, if the last active slave automaton of 
$\nestedA$ terminates, $\slaveA^{\Sigma}$ terminates as well.
More formally, the master automaton of $\nestedA_1$ is $(\Sigma, Q, q_0, \delta, F,\cost)$, where
$Q = Q_m \times (Q_s \times \ldots \times Q_s) \times \{ 0,1\} = Q_m \times Q_s^k \times \{0,1\}$, where 
$Q_m$ is the set of states of the master automaton of $\nestedA$ and 
$Q_s$ is the union of the set of states of
The master automaton keeps track of the master automaton of $\nestedA$ and all its active slave automata. 
Its last bit $\{0,1\}$ indicates whether the slave automaton is active, i.e., if one of slave automata of $\nestedA$ terminates, the bit is flipped to $0$, and
if the bit is $0$ but some of slave automata of $\nestedA$ are active, then start the slave automaton in the configuration corresponding to the current configuration of slave automata.
In the generalized \buchi{} acceptance condition $F$, we encode that the run of the master automaton accepts and runs of all slave automata are finite.  
The components $q_0, \delta, F,\cost$ are defined accordingly to the description.

The slave automaton $\slaveA^{\Sigma}$ has the similar structure to the master automaton of $\nestedA_1$; there are two key difference.
First, if any of tracked slave automata of $\nestedA$ terminates, $\slaveA^{\Sigma}$ terminates as well.
Second, for every transition of $\slaveA^{\Sigma}$, the weight of this transition is the sum of weights of current transitions of tracked slave automata.

Now, we show that on dense words (w.r.t. $\nestedA$), values of both automata coincide. 
Let $\optWord$ be a dense word. 
Consider a position $i$ in $\optWord$. Let $n$ be the number of slave automata invoked by $\nestedA$ before position $i$.
Observe that $\nestedA_1$ invokes a new slave automaton whenever $\nestedA$ does. Therefore, $\nestedA_1$ also invoked $n$ 
slave automata before position $i$. The partial average up to position $i$ in $\nestedA(\optWord)$ is the sum of values 
of all slave automata invoked before position $i$, while in  $\nestedA_1(\optWord)$ this is the sum of values
all slave automata invoked before position $i$ accumulate before position $j > i$, the fist position past $i$ when a new slave automaton is invoked.
Therefore, the difference between partial averages of $\nestedA(\optWord)$ and  $\nestedA_1(\optWord)$ up to position $i$, denoted by $\Delta$,
is the sum of weights slave automata invoked before the position $i$ accumulate past position $j$ multiplied by $\frac{1}{n}$.
Now, due to density of word $\optWord$, each slave automaton invoked before the position $i$ works for at most $\log(n)$ steps, therefore
the absolute accumulated value is at most $C \log(n)$, where $C$ is the maximal absolute weight in the slave automata of $\nestedA$.
Hence, $|\Delta| < \frac{1}{n} \cdot k \cdot C \cdot \log(n)$. Therefore, the partial averages of $\nestedA(\optWord)$ and $\nestedA_1(\optWord)$ converge. 
In consequence, $\nestedA(\optWord) = \nestedA_1(\optWord)$.

Due to Lemma~\ref{l:dense}, there exist a dense word $\optWord$ (w.r.t. $\nestedA$), which has the minimal value among all words. Then,
we have $\nestedA(\optWord) = \nestedA_1(\optWord)$. Therefore, $\inf_{\iterWord} \nestedA_1(\iterWord) \leq \inf_{\iterWord} \nestedA(\iterWord)$.
Conversely, we show that there exists a \emph{optimal} word $\optWord$ such that $\inf_{\iterWord} \nestedA_1(\iterWord) = \nestedA_1(\optWord)$ and 
$\optWord$ is a dense word (w.r.t. $\nestedA$). Notice that every time $\nestedA_1$ encodes in its generalized \buchi{} condition that all slave automata terminate infinitely often.
It follows that between two positions $a,b$ at which the generalized \buchi{} condition is satisfies, all slave automata invoked before $a$ terminate before $b$.
Thus, to show that $\optWord$ is a dense word (w.r.t. $\nestedA$), we show that the generalized \buchi{} condition of $\nestedA_1$ is satisfied at $\optWord$ sufficiently densely.
Observe that a deterministic $(\flimavg;\fsum)$-automaton of width bounded by $1$ is essentially
equivalent to a ${\flimavg}$-automaton with $\epsilon$-transitions (see Lemma~\ref{l:WidthOnePoly}), for which
one can construct words of optimal values which satisfies their (generalized) \buchi{} condition arbitrarily densely as long as the density converges to $0$.
Thus, such a word $\optWord$ exists, $\nestedA(\optWord) = \nestedA_1(\optWord)$. It follows that
$\inf_{\iterWord} \nestedA_1(\iterWord) = \inf_{\iterWord} \nestedA(\iterWord)$.
\end{proof}

\begin{claim}
The emptiness problem for deterministic $\flimavg$-automata with weights given in unary is $\NLOGSPACE$-complete.
\label{c:nlogspace}
\end{claim}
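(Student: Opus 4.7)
The plan is to prove the two bounds separately. For the $\NLOGSPACE$ upper bound, I would first characterise the yes-instances graph-theoretically: a deterministic $\flimavg$-automaton $\aut$ admits a word of value at most $\lambda$ if and only if its transition graph contains a strongly connected component $S$ that is reachable from the initial state, contains an accepting state $q_a$, and contains a simple cycle of mean weight at most $\lambda$. The forward direction is routine, since any accepting $\flimavg$-run is eventually absorbed into such an SCC; the backward direction uses the classical detour construction, iterating a mean-optimal cycle of $S$ between progressively rarer short excursions to $q_a$, so that the excursions are asymptotically negligible and the resulting $\flimavg$ equals the cycle mean. The standard minimum-mean-cycle decomposition then lets me restrict attention to simple cycles, of length at most $|Q|$.

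Given this characterisation, the $\NLOGSPACE$ algorithm nondeterministically guesses states $q$ and $q_a$ (with $q_a$ accepting), then verifies $q_0 \to^* q_a$, $q_a \to^* q$ and $q \to^* q_a$ with three successive invocations of the classical $\NLOGSPACE$ directed-reachability procedure, reusing the logarithmic workspace between calls. Finally, starting from $q$, the algorithm guesses a cycle back to $q$ step by step, maintaining only the current state, a length counter bounded by $|Q|$, and a running partial sum, and accepts if it returns to $q$ with sum at most $\lambda$ times the length (after clearing denominators if $\lambda = p/q$ is rational). The main technical point, and the step I expect to be most delicate, is that the running partial sum must fit in $O(\log n)$ bits: because weights are in unary, the maximum absolute weight $W$ is polynomial in the input, so a cycle of length at most $|Q|$ accumulates a sum of absolute value at most $|Q|\cdot W$, which is polynomial and hence representable with logarithmically many bits. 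Under a binary encoding the same scheme would need polynomial space, which is precisely why the unary hypothesis is essential.

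For $\NLOGSPACE$-hardness I would reduce from directed $s$--$t$ reachability. Given a directed graph $G=(V,E)$ with source $s$ and target $t$, I construct a deterministic $\flimavg$-automaton with state set $V$, initial state $s$, single accepting state $t$, alphabet $E \cup \{\sigma\}$, transitions $(u,e,v)$ of weight $0$ for each edge $e=(u,v)\in E$, and a zero-weight self-loop $(t,\sigma,t)$. Each letter labels at most one transition, so the automaton is deterministic, and with threshold $\lambda = 0$ the automaton admits a word of value at most $0$ if and only if $t$ is reachable from $s$ in $G$, yielding the desired logspace reduction and completing the claim.
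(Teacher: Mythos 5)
Your proposal is correct and follows essentially the same route as the paper: both reduce the problem to finding a reachable cycle of mean weight at most $\lambda$ that is connected to an accepting state, and both verify it in $\NLOGSPACE$ by guessing the cycle step by step while maintaining a running sum that fits in logarithmic space precisely because unary weights make the maximal weight, and hence the cycle sum, polynomial in the input. The only cosmetic differences are that you bound the cycle length by $|Q|$ via simple cycles where the paper uses a bound of $C\cdot|\aut|$, and that you spell out the hardness reduction (where you should add a non-accepting sink to make the transition function total, per the paper's definition of determinism) which the paper leaves implicit.
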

\begin{proof}
$\NLOGSPACE$-hardness readily follows from $\NLOGSPACE$-hardness of directed graph reachability.

For containment in $\NLOGSPACE$, recall that the infimum over all words of the values of a given $\flimavg$-automaton $\aut$
is less of equal to a given $\lambda$ if and only if there exists a cycle in the automaton such that
\begin{compactenum}[(1)]
\item the cycle is reachable firm the initial state, 
\item the sum of weights along the cycle is less of equal to $\lambda$, and
\item there exists an accepting state $s_a$ and a state $s_c$ such that $s_a$ is reachable from $s_c$ and vice versa.
\end{compactenum}
To check conditions (1),(2), (3) we non-deterministically pick states $s_a,s_c$ and verify conditions
(1) and (3) with reachability queries, which are in $\NLOGSPACE$. 
To check (2) we observe that if there exists a cycle satisfying conditions (1),(2), (3), then there also
exists a cycle that satisfy these conditions and of length bounded by $C \cdot |\aut|$, where $C$ is the maximal absolute value of the weights from $\aut$.
Now, if weights in $\aut$ are given in unary, $C < |\aut|$, the length of the cycle is at most $|\aut|^2$ and we need only logarithmic memory 
to non-deterministically pick the cycle state by state and keep track of the sum to verify that it is less or equal to $\lambda$.
\end{proof}

\begin{restatable}{lemmaStatement}{WidthOne}
The emptiness problem for deterministic $(\flimavg;\fsum)$-automata of width $1$ is 
in $\PTIME$ and if the weights are in unary, then it is in $\NLOGSPACE$.
\label{l:WidthOnePoly}
\end{restatable}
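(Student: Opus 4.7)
The plan is to reduce the emptiness problem to a minimum-ratio-cycle computation in a polynomial-size weighted product graph, and then apply algorithms analogous to those invoked in Claim~\ref{c:nlogspace}. Concretely, I would form the product graph $G$ whose vertices are pairs $(q_m, q_s)$ with $q_m$ a master state and $q_s$ either a slave state or $\bot$ (no non-dummy slave active); since the width is $1$, $|G|$ is polynomial in $|\nestedA|$. Each edge $e$ corresponds to a synchronous joint transition of the master and any currently active slave on an input letter, and is labelled by a pair $(w(e), c(e))$, where $w(e)$ is the slave's transition weight ($0$ if no slave is active) and $c(e) \in \{0, 1\}$ equals $1$ precisely when the edge terminates a slave at an accepting state.

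The key identity is that for every lasso word $v u^{\omega}$ whose cycle part $C$ contains a master-accepting configuration and at least one $c$-marked edge, we have $\nestedA(v u^{\omega}) = W(C)/c(C)$, where $W(C) = \sum_{e \in C} w(e)$ and $c(C) = \sum_{e \in C} c(e)$. This follows from unfolding the $\flimavg$ definition: after $k$ cycle traversals the sum of slave returns is $k \cdot W(C)$ up to an $O(1)$ stem correction (every slave invoked inside the cycle also terminates inside it, since the cycle closes at the same product state) while the number of invocations is $k \cdot c(C) + O(1)$, so the ratio converges to $W(C)/c(C)$. Combining this identity with Lemma~\ref{l:dense}(1), which yields that lasso words approximate the infimum whenever condition $(*)$ fails, and with Lemma~\ref{l:Infinity}, which detects the $-\infty$ case, emptiness reduces to deciding whether $(*)$ holds, or whether some simple accepting cycle in $G$ has ratio at most $\lambda$.

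For the $\PTIME$ upper bound, condition $(*)$ is decidable in polynomial time by Lemma~\ref{l:Infinity}, and the minimum-ratio cycle in $G$ can be found by classical techniques, for instance binary search over $\lambda$ combined with Bellman--Ford-style negative-cycle detection on the reweighted graph with edge weights $w(e) - \lambda c(e)$, all polynomial in $|G|$. For the $\NLOGSPACE$ upper bound under unary weights, I mimic the proof of Claim~\ref{c:nlogspace}: the optimum ratio is realised on a simple cycle of length at most $|G|$, so one nondeterministically guesses the cycle vertex-by-vertex together with a reachable stem from the initial configuration and the requisite acceptance witnesses, maintaining running sums $W$ and $c$ that fit in $O(\log |\nestedA|)$ bits thanks to the unary-weight assumption.

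The main obstacle I anticipate is the final inequality check $W(C)/c(C) \leq \lambda$ when $\lambda = p/q$ is given in binary: the products $q \cdot W$ and $p \cdot c$ may require polynomially many bits and so cannot be stored in full in logarithmic space. The standard workaround is to carry out the comparison $q \cdot W \leq p \cdot c$ incrementally by streaming the binary representations of $p$ and $q$ from the input, using only logarithmic auxiliary space. A secondary technical point is careful formalisation of the joint acceptance condition, ensuring that every slave started inside the cycle also terminates inside the cycle (which is automatic from the product structure, since the cycle must close at the same product state) and that both master acceptance and at least one $c$-marked edge appear in the cycle; this is handled by the usual nondeterministic Büchi-cycle guessing pattern.
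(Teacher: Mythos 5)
Your proposal is correct in outline but takes a genuinely different route from the paper. The paper's proof never forms a ratio: it contracts each maximal slave-run fragment of a width-$1$ run into a \emph{single} transition over a new alphabet $\Delta$, whose weight is the minimal value the invoked slave automaton can return while the master moves between the two given states, and turns the dummy-only fragments into silent transitions; the result is an ordinary deterministic $\flimavg$-automaton with silent moves, to which Claim~\ref{c:nlogspace} applies directly, because after the contraction each non-silent transition contributes exactly one term to the limit average. You instead keep the original letter-by-letter time axis, build the master-slave product graph with edge labels $(w(e),c(e))$, and reduce emptiness to a minimum cost-to-count ratio cycle. Your route avoids the precomputation of minimal slave values per pair of master states (itself a shortest-path computation that the paper's proof leaves implicit), but it pays by importing Lemma~\ref{l:Infinity} and Lemma~\ref{l:dense}(1) into a lemma that the paper proves in a self-contained way, and by needing a cycle-decomposition argument that you leave implicit. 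Concretely, the one step you should make explicit is why the optimum ratio is attained on a \emph{simple} cycle: a non-simple cycle may contain a subcycle with $c=0$ and $W<0$ (an active slave accumulating negative weight without terminating), and then the mediant inequality fails and the ratio of the composite cycle can undercut every simple cycle --- indeed the infimum is then $-\infty$. For width $1$ this situation is exactly condition~(*) (equivalently, condition~(**) of Lemma~\ref{l:Infinity} with $j=1$), so once (*) is excluded every reachable $c=0$ cycle on an accepting run has $W\ge 0$, such subcycles can be discarded without decreasing the ratio, and the standard decomposition into simple cycles goes through. With that sentence added, and with the usual \buchi{}-style handling of the accepting state (reachable from and back to the cycle, so the detour's contribution to the ratio can be made arbitrarily small), your argument establishes the same $\PTIME$ and unary-weight $\NLOGSPACE$ bounds as the paper's.
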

\begin{proof}
We observe that $\nestedA$ is essentially a deterministic $\flimavg$-automaton with silent moves. 
More precisely, let $Q_m$ be the set of states of the master automaton of $\nestedA$ and let $n$
be the number of slave automata. 
A run of the automaton of width bounded by $1$, can be partitioned into two types of fragments:
\begin{compactenum}[(1)]
\item fragments corresponding to a single run of slave automata, which are characterized by 
$q_1,q_2 \in Q_m$, the state of the master automaton at beginning and at the end of the fragment, and 
\item fragments where no slave automaton is running, i.e., only dummy slave automata are
invoked,  which are characterized by:

\begin{compactitem}
\item $q_1,q_2 \in Q_m$, the state of the master automaton at beginning and at the end of the fragment,
\item the first letter of the fragment $a$,  
\item the index $i \in \{1,\ldots, n\}$ of the slave automaton invoked, and
\item the value returned by the invoked slave automaton.
\end{compactitem}
\end{compactenum}

We consider a succinct representation of runs of $\nestedA$, where fragments of type (1) are replaced by a single letter $(q_1, q_2)$ and
fragments of type (2) are substituted by a single letter $(q_1,a, q_2,i)$. Moreover, we consider only maximal fragments, i.e., two fragments of type (1) can be merged into one fragment, therefore we forbid 
two successive occurrences of letters $(q_1,q_2) (q_2,q3)$.

Let $\Delta = \{ (q_1, q_2) : q_1, q_2 \in Q_m\} \cup \{ (q_1,a, q_2,i) : q_1, q_2 \in Q_m, a \in \Sigma, i \in \{1,\ldots, n\}\}$. 
We can define a deterministic $\flimavg$-automaton $\nonnestedA$ with silent moves over $\Delta$ which accepts only words that represent accepting runs of
$\nestedA$. The automaton $\nonnestedA$ checks that for two successive letters $a,b$ the second state in $a$ is the same as the first state in $b$ (e.g., $(q_1, q_2, i) (q_2, q_3) (q_3, q4, i')$), 
and it has a list of valid letters, i.e., a letter are valid if there exists a fragment corresponding to it. More precisely, $(q_1, q_2, i)$ is valid iff there exists a word $v$ such that
that (1)~the master automaton in the state $q_1$ upon reading letter $v[1]$ takes a transition at which it invokes $\slaveA_i$, 
(2)~the master automaton moves from $q_1$ to $q_2$ upon reading $v$, and (3)~slave automaton $\slaveA_i$ accepts the word $v$. Validity of letters $(q_1, q_2)$ is defined similarity.

Moreover, transitions over letters $(q_1, q_2)$ are silent (have no value) and transitions over letters
$(q_1, q_2,i)$ have the minimal value associated with such a fragment, i.e., it is the minimal value 
slave automaton $\slaveA_i$ can return on a word $v$ such that the master automaton moves on $v$ from state $q_1$ to $q_2$.
Thus, the value of $\nonnestedA$ on $\optWord' \in \Delta$ is (provided it is accepted) the minimal value $\nestedA$ can return
on the run with the sequence of fragments corresponding to $\optWord'$. Every run has the corresponding sequence of fragments 
In consequence,   $\inf_{\iterWord \in \Sigma^{\omega}}  \nestedA(\iterWord) = \inf_{\iterWord' \in \Delta^{\omega}}  \nestedA(\iterWord')$.

Finally, observe that $\nonnestedA$ can be constructed upon demand from $\nestedA$ in logarithmic space, i.e., we can answer query about parts of $\nonnestedA$ without outputting the whole automaton. 
The emptiness problem for $\flimavg$-automata is decidable in $\NLOGSPACE$ if weights are given in unary notation and in $\PTIME$ if weights are given in binary (Claim~\ref{c:nlogspace}). 
The silent moves are interleave with non-silent moves, therefore we can easily remove them and decide
the emptiness problem for $\nonnestedA$, and in turn the emptiness problem for $\nestedA$, (a)~in $\NLOGSPACE$ for unary weights, and
(b)~in $\PTIME$ for binary weights.
\end{proof}

\softParagraph{Key intuitions}. 
We show that every transition of $\masterA$, the master automaton of $\nestedA$, 
at which a slave automaton is invoked, can be substituted by a transition 
whose weight is the minimal value the invoked slave automaton can achieve. 
More precisely, while a slave automaton is running on the input word, 
the master automaton $\masterA$ is still active. 
Therefore, we substitute transitions $(q,a,q',i)$ of $\masterA$
by multiple transitions of the form $(q, (q,a,i,q''), q'')$, where 
$(q,a,i,q'')$ is a new letter, $q''$ is a state of $\masterA$
and the weight of this transition is the minimal value $\slaveA_i$ can achieve over words 
$au$ such that $\masterA$ moves from $q$ to $q''$ upon reading $au$.
Such a transformation preserves the infimum over all words and it transforms a deterministic 
$(\flimavg;\fsum)$-automaton of width $1$ to a deterministic $\flimavg$-automaton. 
The emptiness problem for $\flimavg$-automaton is decidable in $\PTIME$ and 
even in $\NLOGSPACE$ provided that weights are given in unary.
 
We now present the algorithm and lower bound for our main result.

\Paragraph{The algorithm}. We present an algorithm, which, given
a non-deterministic $(\flimavg,\fsum)$-automaton $\nestedA$ of width $k$ and $\lambda \in \Q$, decides whether $\inf_{\iterWord \in \Sigma^{\omega}} \nestedA(\iterWord) \leq \lambda$.
\begin{compactenum}
\item Transform $\nestedA$ into a deterministic $(\flimavg,\fsum)$-automaton $\nestedA_d$ of the same width such that
$\inf_{\iterWord \in \Sigma^{\omega}} \nestedA(\iterWord) =  \inf_{\iterWord \in (\Sigma \times \Gamma)^{\omega}} \nestedA_d(\iterWord)$ (Lemma~\ref{l:deterministic}).
\item Check condition (*) for  $\nestedA_d$. If it holds, then $\inf_{\iterWord \in \Sigma^{\omega}} \nestedA(\iterWord) = -\infty$ and return answer \textbf{YES}. 
Otherwise, continue the algorithm.
\item Transform $\nestedA_d$ into a deterministic $(\flimavg,\fsum)$-automaton $\nestedA_1$ of width $1$ such that
  $\inf_{\iterWord \in (\Sigma \times \Gamma)^{\omega}} \nestedA_d(\iterWord) = \inf_{\iterWord \in \Delta^{\omega}} \nestedA_1(\iterWord)$ (Lemma~\ref{l:reduction}).
\item Compute $\inf_{\iterWord \in \Delta^{\omega}} \nestedA_1(\iterWord)$ (Lemma~\ref{l:WidthOnePoly}), and return whether 
$\inf_{\iterWord \in \Delta^{\omega}} \nestedA_1(\iterWord) \leq \lambda$.
\end{compactenum}
Transformations in (1) and (3) are polynomial in the size of the automaton and exponential in $k$. Also, 
transformation from (1) does not increase $k$. Therefore, the size of $\nestedA_1$ is polynomial in the size $\nestedA$
and singly exponential in $k$. Moreover, these transformations can be done on-the-fly, i.e., 
there is not need to store the whole resulting automaton. 
Therefore, checks from (2) and (4), can be done in $\NLOGSPACE$ if $k$ is constant
and weights are in unary, $\PTIME$ if $k$ is constant,  
and $\PSPACE$ if $k$ is given in unary.

\Paragraph{Hardness results}.
If $k$ is constant, then the reachability problem on directed graphs, which is $\NLOGSPACE$-complete, can be reduced to language emptiness 
of a finite automaton, which is a special case the emptiness problem for non-deterministic 
$(\flimavg,\fsum)$-automata of width~1 with unary weights.
If $k$ is given in unary, consider the emptiness problem for the intersection of 
regular languages, which given $k$ and regular languages $\lang_1, \ldots, \lang_k$, asks
whether $\lang_1 \cap \ldots \cap \lang_k = \emptyset$. 
This problem is $\PSPACE$-complete~\cite{DBLP:conf/focs/Kozen77} and reduces to 
the emptiness problem for deterministic $(\flimavg,\fsum)$-automata of width 
given in unary: the $\PSPACE$-hardness result for emptiness of 
NWA given in~\cite{nested} uses NWA of width $|\nestedA|$.

\begin{theorem}
The emptiness problem for non-deterministic $(\flimavg,\fsum)$-automata is 
(a)~$\NLOGSPACE$-complete in the size of $\nestedA$ for constant width~$k$ with weights 
in unary; (b)~$\PTIME$ in the size of $\nestedA$ for constant width~$k$;
and 
(c)~$\PSPACE$-complete when the bounded width $k$ is given as input in unary.
\label{th:main}
\end{theorem}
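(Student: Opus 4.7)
The plan is to prove Theorem~\ref{th:main} by assembling the four lemmas already established, together with two reductions for the matching lower bounds. For the upper bound, given a non-deterministic $(\flimavg,\fsum)$-automaton $\nestedA$ of width $k$ and threshold $\lambda$, I would first invoke Lemma~\ref{l:deterministic} to replace $\nestedA$ by a deterministic automaton $\nestedA_d$ of the same width $k$ whose number of configurations is only polynomially larger, with the alphabet blown up by a factor exponential in $k$. Next, I would run the procedure from Lemma~\ref{l:Infinity} to decide condition (*) on $\nestedA_d$: if (*) holds, the infimum is $-\infty$ and the answer to the emptiness query is YES. Otherwise, by the remark following Lemma~\ref{l:dense}, the infimum is finite and attained (in the limit) by dense words, justifying the use of Lemma~\ref{l:reduction} to construct a deterministic $(\flimavg,\fsum)$-automaton $\nestedA_1$ of width $1$ with the same infimum. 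Finally, I would apply Lemma~\ref{l:WidthOnePoly} to $\nestedA_1$ to decide whether its infimum is at most $\lambda$.

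The complexity bookkeeping then goes as follows. Each of the four reductions can be performed on-the-fly, i.e. the states and transitions of the intermediate automata can be enumerated from $\nestedA$ using only logarithmic workspace in the sizes of those intermediate automata. When $k$ is constant, $|\nestedA_1|$ is polynomial in $|\nestedA|$, so the overall algorithm runs in $\NLOGSPACE$ when weights are unary and in $\PTIME$ in general, giving parts (a) and (b). When $k$ is part of the input in unary, $|\nestedA_1|$ is singly exponential in $k$ and polynomial in $|\nestedA|$; logarithmic space in $|\nestedA_1|$ becomes polynomial in $|\nestedA|+k$, and the $\PSPACE$ bounds for condition (*) from Lemma~\ref{l:Infinity} fit within the same budget, yielding the $\PSPACE$ upper bound of part (c).

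For the matching lower bounds, $\NLOGSPACE$-hardness in part (a) follows by encoding directed graph reachability as emptiness of a non-deterministic finite automaton, which is a trivial instance of our problem with width $1$ and unary weights. For the $\PSPACE$-hardness in part (c), I would adapt the reduction from intersection non-emptiness of $k$ regular languages $\lang_1,\ldots,\lang_k$ (which is $\PSPACE$-complete by Kozen's theorem): the master automaton invokes $k$ slave automata in parallel at the first position, one for each $\lang_i$, each responsible for checking membership of the input in $\lang_i$ up to a common end marker and returning a fixed weight on acceptance; with width exactly $k$, emptiness below an appropriate threshold holds iff $\bigcap_i \lang_i \neq \emptyset$. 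This mirrors the construction of~\cite{nested} but keeps the width linear in the input size parameter $k$.

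The main obstacle, and the reason the argument is more than bookkeeping, is coordinating the reductions so that the polynomial-in-$|\nestedA|$, exponential-in-$k$ intermediate automata are never materialized explicitly; the explicit on-the-fly descriptions given in the proofs of Lemmas~\ref{l:deterministic}, \ref{l:reduction} and~\ref{l:WidthOnePoly} already supply exactly this, so the composition of the four steps stays within $\PSPACE$ even in the worst case and drops to $\NLOGSPACE$ or $\PTIME$ once $k$ becomes constant.
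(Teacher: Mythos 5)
Your proposal is correct and follows essentially the same route as the paper: the same four-step pipeline (determinization via Lemma~\ref{l:deterministic}, the condition-(*) check of Lemma~\ref{l:Infinity}, the width-$1$ reduction of Lemma~\ref{l:reduction}, and the width-$1$ emptiness test of Lemma~\ref{l:WidthOnePoly}), the same on-the-fly composition argument for the complexity bounds, and the same two lower-bound reductions (graph reachability for $\NLOGSPACE$, intersection non-emptiness of regular languages for $\PSPACE$). The only difference is that you spell out the intersection-emptiness gadget slightly more explicitly than the paper, which simply cites the width-$|\nestedA|$ construction of~\cite{nested}.
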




\subparagraph*{Acknowledgements.}
This research was supported in part by the Austrian Science Fund (FWF) under grants S11402-N23 (RiSE/SHiNE) and Z211-N23 (Wittgenstein Award), 
ERC Start grant (279307: Graph Games), Vienna Science and Technology Fund (WWTF) through project ICT15-003 and
by the National Science Centre (NCN), Poland under grant 2014/15/D/ST6/04543.


\bibliography{papers}

\end{document}